\newtheorem{definition}{Definition}{\bfseries}{\itshape}
\newtheorem{lemma}{Lemma}{\bfseries}{\itshape}
\newtheorem{proposition}{Proposition}{\bfseries}{\itshape}
\newtheorem{corollary}{Corollary}{\bfseries}{\itshape}
\newtheorem{theorem}{Theorem}{\bfseries}{\itshape}
{\itshape}{\rmfamily}
\newenvironment{sketch}{\noindent \textit{Sketch of proof:}}{\qed}
\newtheorem{example}{Example}{\bfseries}{\itshape}
\newcommand{\nat}{\mathbb{N}}
\newcommand{\tuple}[1]{\langle #1 \rangle}
\newcommand{\set}[1]{\{ #1 \}}
\newcommand{\mset}[1]{\langle \langle#1 \rangle \rangle}
\newcommand{\vect}[1]{\mathbf{#1}}
\newcommand{\Interp}[1]{\llbracket #1 \rrbracket}
\newcommand{\Interpb}[1]{\llbracket #1 \rrbracket}           
\newcommand{\Rv}[1]{\mathit{RV}(#1)}
\newcommand{\Prot}{\mathcal{P}}
\newcommand{\Confs}{\mathcal{C}}
\newcommand{\Confseo}{\Gamma_{\even\odd}}
\newcommand{\Confsn}[1]{\mathcal{C}^{(#1)}}
\newcommand{\trans}{\rightarrow}
\newcommand{\transP}{\Rightarrow}
\newcommand{\xRightarrow}[2][]{\ext@arrow 0359\Rightarrowfill@{#1}{#2}}
\newcommand{\ltransP}[1]{\xRightarrow{#1}}
\newcommand{\Net}{\mathcal{N}}
\newcommand{\Reach}[1]{\mathit{Reach}(#1)}
\newcommand{\even}{\textsf{E}}
\newcommand{\odd}{\textsf{O}}
 \newcommand{\eotrans}{\dashrightarrow}
 \newcommand{\leotrans}[1]{\overset{#1}{\dashrightarrow}}
\tikzstyle{pstate}=[rectangle, draw=black, rounded corners]
\tikzstyle{init-left}=[pin={[pin distance=7pt,pin
\tikzstyle{state}=[circle,thick,draw=black,inner sep=0pt,minimum
\tikzstyle{finalstate}=[accepting,circle,thick,draw=black,inner sep=0pt,minimum
    \tikzstyle{lstate}=[diamond,thick,draw=black,inner sep=0pt,minimum
    \tikzstyle{lfinalstate}=[accepting,diamond,thick,draw=black,inner sep=0pt,minimum
\tikzstyle{statesbis}=[diamond,thick,inner sep=0pt,minimum size=6mm]
\tikzstyle{small-states}=[circle,thick,inner sep=0pt,minimum size=5mm]
\tikzstyle{small-statesbis}=[diamond,thick,inner sep=0pt,minimum size=5mm]
\tikzstyle{init-left}=[pin={[pin distance=7pt,pin edge={<-,black,thick}]left:}]
\tikzstyle{init-right}=[pin={[pin distance=7pt,pin edge={<-,black,thick}]right:}]
\tikzstyle{init-below}=[pin={[pin distance=7pt,pin edge={<-,black,thick}]below:}]
\tikzstyle{vertex}=[double,rounded corners,thick,inner sep=0pt,minimum
\tikzstyle{vertexbis}=[diamond,double,thick,inner sep=0pt,minimum size=4mm]
\tikzstyle{petri-p}=[circle,thick,inner sep=0pt,minimum size=5mm]
\tikzstyle{petri-t}=[rectangle,thick,inner sep=0pt,minimum width=7mm,minimum height=1mm]
\tikzstyle{petri-t2}=[rectangle,thick,inner sep=0pt,minimum width=1mm,minimum height=7mm]
\tikzstyle{petri-tok}=[circle,inner sep=0pt,minimum size=4pt, color=black,fill=black]
\tikzstyle{gate}=[rectangle,thick,inner sep=0pt,minimum size=6mm,draw=black]
\title{Deciding the existence of cut-off in parameterized rendez-vous
  networks\thanks{Partly supported by ANR FREDDA (ANR-17-CE40-0013).}}
\author[1]{Florian Horn}
\author[1]{Arnaud Sangnier}
\affil[1]{IRIF, CNRS, Universite de Paris, France}
\date{}
\begin{document}
\maketitle            
\begin{abstract}
We study  networks of processes which all execute the same
finite-state protocol and communicate thanks to a rendez-vous
mechanism. Given a protocol, we are interested in checking whether
there exists a number, called a cut-off, such that in any networks with a bigger
number of participants, there is an execution where all the
entities end in some final states. We provide decidability and
complexity results of this problem under various assumptions, such as
absence/presence of a leader or symmetric/asymmetric rendez-vous.
\end{abstract}

\section{Introduction}

\textit{Networks with many identical processes.} One of the difficulty 
in verifying distributed systems lies in the fact that many of them are
designed for an unbounded number of participants. As a consequence, to
be exhaustive in the analysis,
one needs to design formal methods which takes into account this characteristic.
In \cite{german-reasoning-92}, German and Sistla introduce a model to
represent networks with a fix but unbounded number of entities. In this model, each participant 
executes the same protocol and they communicate between each other
thanks to rendez-vous (a synchronization mechanism allowing two
entities to change their local state simultaneously). The number of
participants can then be seen as a parameter of the model and possible
verification problems ask for instance whether a property holds for all the
values of this parameter or seeks for some specific value ensuring a
good behavior. With the
increasing presence of distributed mechanisms (mutual
exclusion protocols, leader election
algorithms, renaming algorithms, etc) in the core of our
computing systems, there
has been in the last two decades  a regain of attention in the study
of such parameterized networks.

Surprisingly, the verification of these parameterized systems is
sometimes easier than the case where the number of participants is
known. This can be explained by the following reason:  in the
parameterized case the procedure can adapt on demand the number of 
participants  to build a problematic execution. It is indeed what happens with the liveness verification of asynchronous
shared-memory systems. This problem is \textsc{Pspace}-complete for a finite
number of processes and in \textsc{NP} when this number is a parameter
\cite{durand-model-fmsd17}. It is hence worth studying the complexity
of the verification of such parameterized models and many recent works
have attacked these problems considering networks with different means of
communication. For instance in
\cite{eparza-verif-lics99,delzanno-param-concur10,bertrand-controlling-lmcs19,bertrand-reconfiguration-concur19}
the participants communicate thanks to broadcast of messages, in \cite{clarke-verification-concur04,aminof-param-vmcai14}
they use a token-passing mechanism , in \cite{bollig-param-rp14} a
message passing mechanism and in \cite{esparza-param-cav13} the
communication is performed through  shared registers. The relative expressiveness of some of those
models has been studied in~\cite{aminof-expressive-15}. Finally in his
survey~\cite{esparza-keeping-stacs14}, Esparza shows that minor
changes in the setting of parameterized networks, such as the presence of a
controller (or equivalently a leader), might drastically change the complexity 
of the verification problems. 

\smallskip

\noindent\textit{Cut-off to ease the verification.} When one has  to prove
the correctness of  a distributed algorithm designed to work for an
unbounded number of participants, one technique consists in proving
that the algorithm has a cut-off, i.e. a bound on the
number of processes such that if it behaves correctly for this
specific number of processes then
it will still be correct  for any bigger networks. Such a
property allows to reduce the verification procedure to the analysis
of the algorithm with a finite number of entities. Unfortunately, as
shown in \cite{aminof-param-dc18}, many parameterized systems do
not have a cut-off even for basic properties. Instead of checking whether a general class of models admits
a cut-off, we propose in this work to study the following problem: given a
representation of a system and a class of properties, does it admit a cutoff ?  To
the best of our knowledge, looking at the existence of a cutoff as a
decision problem is a subject that has not received a lot of attention
although it is interesting both practically and theoretically. First,
in the case where this problem is decidable,
it allows to find automatically cutoffs for specific systems
even though they belong to a class for which there is no general
results on  the existence of cutoff. The search of cutoffs has been studied in \cite{abdulla-parameterized-sttt16}
where the authors propose  a semi-algorithm for verification of parameterized
networks with respect to safety properties. This algorithm stops when
a cutoff is found. However it is not stated how to determine the
existence of this cutoff, neither if this is possible or not.  In
\cite{kaiser-dynamic-cav10}, the authors propose a way to compute
dynamically a cutoff, but they consider systems and properties for which they know
that a cutoff exists.
Second, from the theoretical point of view, the
cutoff decision problem is interesting because it goes beyond the
classical problems for parameterized systems that usually seek for the
existence of a number of participants which satisfies a property or
check that a property hold for all possible number of
participants. Note that in the latter case, one might be in a
situation that
for a property to hold a minimum number of participants is necessary
(and below this number the property does not hold), such a situation
can be detected with the existence of a cutoff but not with the simple
universal quantification.

\smallskip

\noindent \textit{Rendez-vous networks.} We focus on networks where the communication is performed by
rendez-vous. There are different reasons for this choice. First,  we
are not aware of any technique to decide automatically the existence
of a cut-off in parameterized  systems, it is hence convenient to look at
this problem  in a well-known setting. Another aspect which motivates the choice of this model is that
the rendez-vous communication corresponds
to a well-known paradigm in the design of concurrent/distributed
systems (for instance rendez-vous in the programming languages
\textsc{C} or \textsc{Java} can be easily implemented thanks to
wait/notify mechanisms). Rendez-vous communication seems as well a
natural feature  for parameterized systems used to model for instance
crowds or biological systems (at some point we consider
symmetric rendez-vous which can be seen less common in computing
systems but make sense for these other applications).  Last but not least, rendez-vous networks are very close to
population protocols \cite{angluin-computational-dc07} for which there has been in the last years a
regain of interest in the community of formal methods \cite{esparza-verification-AI17,blondin-peregrine-cav18,blondin-express-concur19}. Population
protocols and rendez-vous networks are both based on rendez-vous
communication,  but in population protocols it is furthermore required
that all the fair executions converge to some
accepting set of configurations (see \cite{esparza-verification-AI17}
for more details). In our case, we seek  for the
existence of an
execution ending with all the processes in a final state. The  similarities  between the two
models let us think that the formal techniques we use could be
adapted for
the analysis of some population protocols.

\smallskip

\noindent \textit{Our contributions.} We  study the
Cut-off Problem (C.O.P.) for rendez-vous networks. It consists in
determining whether, given a protocol labeled with rendez-vous
primitives, there exists a bound $B$, such that in any  networks of
size bigger than $B$ where the processes all run the same protocol there
is an execution which brings all the processes to a final
state. We assume furthermore that in our network,
there could be one extra entity, called the leader, that runs its own specific protocol. We first show that C.O.P. is decidable by
reducing it to a new decision problem on Petri nets. Unfortunately we
show as well that it is non elementary thanks to a reduction  from the
reachability problem in Petri nets\cite{czerwinski-reachability-stoc-19}. We then show that better
complexity bounds  can be obtained if we assume the rendez-vous to be
symmetric (i.e. any process that requests a rendez-vous can as well
from the same state  accept one and vice-versa) or if we
assume that there is no leader. For each of these restrictions, new
algorithmic techniques for the analysis of rendez-vous networks are
proposed. The following table sums up the complexity bounds we obtain.


\begin{table}[htbp]
 \begin{center}
  \begin{tabular}{|c||c|c|}
    \hline
    & Asymmetric rendez-vous &~~Symmetric rendez-vous~~\\
    \hline
    \hline
    ~~Presence of a leader~~&~~Decidable and non-elementary~~&\textsc{PSpace} \\
    \hline
    Absence of leader & \textsc{EXPSpace} & \textsc{NP}\\
    \hline
    \end{tabular}
    \caption{Complexity results obtained for the Cut-Off Problem}
  \end{center}
  \end{table}
Due to lack of space, omitted details and proofs can be found in Appendix.

\section{Modeling networks with rendez-vous communication}

We write $\nat$ to denote the set of natural numbers and $[i,j]$ to
represent the set $\set{k\in \nat \mid i\leq k \mbox{ and } k \leq
  j}$ for $i,j \in \nat$. For a finite set $E$, the set $\nat^E$ represents
the multisets over $E$. For two elements $m,m' \in  \nat^E$, we denote
$m+m'$ the multiset such that $(m+m')(e) = m(e) +m'(e)$ for
all $e \in E$. We say that $m \leq m'$ if and only if $m(e) \leq
m'(e)$ for all $e \in E$. If $m \leq m'$, then $m'-m$ is the multiset
such that  $(m'-m)(e) = m'(e)-m(e)$ for
all $e \in E$. The size of a multiset $m$ is given by
$|m| =\Sigma_{e\in E}m(e)$. For $e \in E$, we use sometimes the
notation $e$ for the multiset $m$ verifying $m(e)=1$ and
$m(e')=0$ for all $e' \in E\setminus\set{e}$ and 
the notation $\mset{ e1,e1,e2,e3}$ to represent the
multiset with four elements $e1,e1,e2$ and $e3$.

\subsection{Rendez-vous protocols}

We are now ready to define our model of networks. We assume that all
the entities in the network (called sometimes processes) behave
similarly following the same protocol except one entity, called the
leader, which might behave differently. The communication in the
network is pairwise and is performed by rendez-vous through a
communication alphabet $\Sigma$. Each entity can either request a
rendez-vous, with the primitive $?a$, or answer to a rendez-vous, with
the primitive $!a$ where $a$ belongs to $\Sigma$. The set of actions
is hence $\Rv{\Sigma}=\set{?a,!a\mid a \in \Sigma}$.

\begin{definition}[Rendez-vous protocol]
A \emph{rendez-vous protocol} $\Prot$ is a tuple
$\tuple{Q,Q_P,Q_L,\Sigma,q_i,q_f,\linebreak[0]q^L_i,q^L_f,E}$ where $Q$ is a finite set of
states partitioned into the processes states $Q_P$ and the leader
states $Q_L$, $\Sigma$ is a finite 
alphabet, $q_i\in Q_P$ [resp. $q^L_i\in Q_L$] is the initial state of the
processes [resp.  of the leader], $q_f\in Q_P$ [resp. $q^L_f\in Q_L$]
is the final state of the processes [resp. of the leader], and $E
\subseteq (Q_P \times \Rv{\Sigma} \times Q_P) \cup (Q_L \times
\Rv{\Sigma} \times Q_L)$ is the set of edges. 
\end{definition}

A configuration of the rendez-vous protocol $\Prot$ is a multiset $C \in
\nat^Q$ verifying that there exists $q \in Q_L$ such that $C(q)=1$ and $C(q')=0$
for all $q'\in Q_L\setminus\set{q}$, in other words there is a single
entity corresponding to the leader. The number of processes in a
configuration $C$ is given by $|C|-1$. We denote by $\Confsn{n}$
the set of configurations $C$ involving $n$ processes, i.e. such that $|C|=n+1$. The initial configuration
with $n$ processes $C^{(n)}_i$ is such that
$C^{(n)}_i(q_i)=n$ and $C^{(n)}_i(q^L_i)=1$ and  $C^{(n)}_i(q)=0$ for
all $q \in Q \setminus \set{q_i,q^L_i}$. Similarly the final  configuration
with $n$ processes $C^{(n)}_f$ verifies
$C^{(n)}_f(q_f)=n$ and $C^{(n)}_f(q^L_f)=1$ and  $C^{(n)}_f(q)=0$ for
all $q \in Q \setminus \set{q_f,q^L_f}$. Hence in an initial
configuration all the entities are in their initial state and in a
final configuration they are all in their final state. The notation $\Confs$ represents the whole set of configurations
equals to $\bigcup_{n \in \nat} \Confsn{n}$.

We are now
ready to formalize the behavior of a rendez-vous protocol. In this
matter, we define the relation $\trans \subseteq \bigcup_{n\geq1}
\Confsn{n}\times \Confsn{n}$ as follows : $C \trans C'$ if, and only
if, there is $a \in \Sigma$ and  two edges $(q_1,?a,q_2),
(q'_1,!a,q'_2) \in E$ such that $C(q_1)>0$ and $C(q'_1)>0$ and
$C(q_1)+C(q'_1)\geq 2$ and $C'=C -
(q_1+q'_1)+(q_2+q'_2)$. Intuitively it means that in $C$ there is one
entity in $q_1$ that requests a rendez-vous and one entity in $q'_1$
that answers to it and they both change their state to respectively
$q_2$ and $q'_2$. We need the hypothesis $C(q_1)+C(q'_1)\geq 2$ in
case $q_1=q'_1$. We use $\trans^\ast$ to represent the reflexive and
transitive closure of $\trans$. Note that if $C \trans^\ast C'$ then
$|C|=|C'|$, in other words there is no deletion or creation of
processes during an execution.

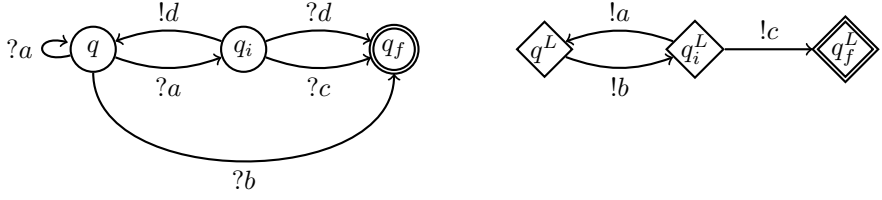
\begin{figure}[htbp]
\begin{center}
\scalebox{1}{
\begin{tikzpicture}[node distance=2cm]
\tikzstyle{every state}=[inner sep=3pt,minimum size=20pt]
\node(0)[state]{$q_i$};

\node(1)[finalstate,right of=0]{$q_f$}
edge [<-,thick,out=-160,in=-20] node[auto] {$?c$}(0)
edge [<-,thick,out=160,in=20] node[auto,swap] {$?d$}(0);

\node(2)[state,left of=0]{$q$}
edge [->,loop left,thick] node[auto] {$?a$}(2)
edge [->,thick,out=-90,in=-90] node[auto,swap] {$?b$}(1)
edge [<-,thick,out=20,in=160] node[auto] {$!d$}(0)
edge [->,thick,out=-20,in=-160] node[auto,swap] {$?a$}(0);

\node(4)[lstate,right of=1]{$q^L$};

\node(3)[lstate,right of=4]{$q^L_i$}
edge [->,thick,out=160,in=20] node[auto,swap] {$!a$}(4)
edge [<-,thick,out=-160,in=-20] node[auto] {$!b$}(4);

\node(5)[lfinalstate,right of=3]{$q^L_f$}
edge [<-,thick] node[auto,swap] {$!c$}(3);

\end{tikzpicture}
}
\end{center}
\caption{A rendez-vous protocol}
\label{fig:ex1}
\end{figure}

\begin{example}
  Figure \ref{fig:ex1} provides an example of rendez-vous
  protocol where the process states are represented  by
  circles and the leader states by diamond.
\end{example}

\subsection{The cut-off problem}

We can now describe the problem we
address. It consists in determining given a protocol
whether there exists a number of processes such that if we put more
processes in the network it is always possible to find an execution
which brings all the entities from their initial state to their final
state. This \textbf{cut-off problem (C.O.P.)} can be stated formally
as follows:
\begin{itemize}
\item \textbf{Input:} A rendez-vous protocol $\Prot$;
\item \textbf{Output:} Does there exist a cut-off $B \in \nat$ such
  that $C^{(n)}_i \trans^\ast C^{(n)}_f$ for
  all $n \geq B$ ?
\end{itemize}

\begin{example}
The rendez-vous network represented in Figure \ref{fig:ex1} admits a
cut-off equal to $3$.  For $n=3$, we have indeed an execution
$C^{(3)}_i \trans^\ast C^{(3)}_f$ : $\mset{q^L_i,q_i,q_i,q_i} \xrightarrow{d} \mset{q^L_i,q_i,q,q_f}
\xrightarrow{a} \mset{q^L, q_i,q,q_f} \xrightarrow{b} \mset{q^L_i, q_i,q_f,q_f}
\xrightarrow{c}  \mset{q^L_f, q_f,q_f,q_f}$ (we indicate for each
transition the label of the corresponding rendez-vous). For $n=4$, the
following sequence of rendez-vous leads to an execution $C^{(4)}_i
\trans^\ast C^{(4)}_f$: $\mset{q^L_i,q_i,q_i,q_i,q_i} \xrightarrow{d} \mset{q^L_i,q_i,q_i,q,q_f}
\xrightarrow{a} \mset{q^L, q_i,q_i,q_i,q_f} \xrightarrow{d} \mset{q^L, q_i,q,q_f,q_f}
\xrightarrow{b} \mset{q^L_i, q_i,q_f,q_f,\linebreak[0]q_f} \linebreak[0]\xrightarrow{c}  \mset{q^L_f,
q_f,q_f,q_f,q_f} $. Then for any $n >4$, we can always come back
to the case where $n=3$ (if $n$ is  odd) or $n=4$ (if $n$ is even). In
fact, we can always let $3$ or $4$ processes in $q_i$ and  move
pairwise the other processes, one in $q$ and one in
$q_f$. Then the processes in $q$ can be brought in $q_f$ thanks to the
rendez-vous $a$ and $b$ and the leader loop between $q^L_i$ and
$q^L$.  Note that if we delete the edge $(q,?a,q_i)$, this protocol
does not admit anymore a cut-off but for all odd number $n \geq 3$, we
have $C^{(n)}_i \trans^\ast C^{(n)}_f$. \end{example}

\subsection{Petri nets}

As we shall see there are some strong connections between rendez-vous
protocols and Petri nets, this is the reason why we recall the
definition of this latter model. 

\begin{definition}[Petri net]
A Petri net $\Net$ is a tuple $\tuple{P,T,Pre,Post}$ where $P$ is a
finite set of places, $T$ is a finite set of transitions, $Pre : T
\mapsto \nat^P$ is the precondition function and $Post : T\mapsto
\nat^P$ is the postcondition function.
\end{definition}

A marking of a Petri net is a multiset $M \in \nat^P$. A Petri net
defines a transition relation $\transP \subseteq \nat^P \times T
\times \nat^P$ such that $M \ltransP{t} M'$ for $M,M' \in \nat^P$ and
$t\in T$ if  and only if $M \geq Pre(t)$ and $M' = M -Pre(t) +
Post(t)$. The intuition behind Petri nets is that marking put tokens
in some places and each transition consumes with $Pre$ some tokens and
produces others thanks to $Post$ in order to create a new
marking. We write $M \transP M'$ iff there exists $t\in T$ such that $M \ltransP{t} M'$.  Given a marking $M \in \nat^P$, the reachability set of $M$
is the set $\Reach{M}=\set{M'\in \nat^P \mid M\transP^\ast M'}$ where
$\transP^\ast$ is the reflexive and transitive closure of $\transP$. One famous problem in Petri nets is the \textbf{reachability problem}:
\begin{itemize}
\item \textbf{Input:} A Petri net $\Net$ and two markings $M$ and $M'$;
\item \textbf{Output:} Do we have $M' \in \Reach{M}$ ?
\end{itemize}
This problem is decidable
\cite{mayr-algorithm-siam-84,kosaraju-decidability-stoc-82,lambert-structure-tcs-92,leroux-vector-popl-11}
and non elementary \cite{czerwinski-reachability-stoc-19}. Another similar
problem that we will refer to  and which  is easier to solve is the \textbf{reversible
  reachability problem}:
\begin{itemize}
\item \textbf{Input:} A Petri net $\Net$ and two markings $M$ and $M'$;
\item \textbf{Output:} Do we have $M' \in \Reach{M}$ and  $M \in \Reach{M'}$?
\end{itemize}
It has been shown in \cite{leroux-vector-lmcs-13} to be
\textsc{EXPSpace}-complete.

\section{Back and forth between rendez-vous protocols and Petri nets}
\label{sec:reduction}
\subsection{From Petri nets to rendez-vous protocols}
\label{sec:petritordv}

We will see here how the reachability problem for Petri nets can be
reduced to the C.O.P. which gives us a non-elementary
lower bound for this latter problem. We consider in the sequel a Petri
net $\Net=\tuple{P,T,Pre,Post}$ and two markings $M,M' \in
\nat^P$. Without loss of generality we can assume that  $M$ and $M'$ are of the
following form: there exists $p_i \in P$ such that $M(p_i)=1$ and $M(p)=0$ for
all $p \in P\setminus \set{p_i}$ and there exists
$p_f \in P$ such that $M'(p_f)=1$ and $M'(p)=0$ for all $p \in
P\setminus \set{p_f}$. Taking these restrictions on the
markings does not alter the complexity of the reachability problem.

\begin{figure}[htbp]
\begin{center}
\scalebox{1}{
\begin{tikzpicture}[node distance=1cm]
\tikzstyle{every state}=[inner sep=3pt,minimum size=20pt]
\node(0)[petri-p,draw=black,label=90:{\small $p_i$}]{};
\node(t1)[petri-t,draw=black,fill=black,below of=0,label=180:{\small
  $t_1$}]{}
edge [<-,thick](0);

\node(1)[petri-p,below left
of=t1,yshift=0.2cm,xshift=-0.5cm,draw=black,label=180:{\small
  $p_2$}]{}
edge [<-,thick](t1);
\node(2)[petri-p,below right of=t1,yshift=0.2cm,xshift=0.5cm,draw=black,label=0:{\small
  $p_3$}]{}
edge [<-,thick](t1);
\node(t2)[petri-t,draw=black,fill=black,below of=t1,label=180:{\small
  $t_2$}]{}
edge [<-,thick](1)
edge [<-,thick](2);;
\node(3)[petri-p,draw=black,below of=t2,label=-90:{\small $p_f$}]{}
edge [<-,thick](t2);

\node(r0)[lstate,right of=0,xshift=3cm]{$q^L_i$}
edge [->,thick,loop left] node[auto,scale=0.8] {$!a$}(r0)
;

\node(r1)[lstate,below of=r0,yshift=-1cm]{$q^L_s$}
edge [<-,thick] node[auto,pos=0.8,scale=0.8] {$!pr(p_i)$}(r0)
edge [->,thick,loop below] node[auto,scale=0.8] {$!b$}(r1);

\node(r2) [lstate,above left of=r1,xshift=-0.7cm]{}
edge [<-,thick] node[sloped,above,scale=0.8] {$!co(p_i)$}(r1);

\node(r3) [lstate,below left of=r1, xshift=-0.7cm]{}
edge [<-,thick] node[sloped,above,scale=0.8] {$!pr(p_2)$}(r2)
edge [->,thick] node[sloped,above,scale=0.8] {$!pr(p_3)$}(r1);;
\node(r4) [lstate,above right of=r1,xshift=0.5cm]{}
edge [<-,thick] node[sloped,above,scale=0.8] {$!co(p_2)$}(r1);
\node(r5) [lstate,below right of=r1, xshift=0.5cm]{}
edge [<-,thick] node[sloped,below,scale=0.8] {$!co(p_3)$}(r4)
edge [->,thick] node[sloped,above,scale=0.8] {$!pr(p_f)$}(r1);
\node(r6)[lfinalstate,below of=r1,yshift=-1cm]{$q^L_f$}
edge [<-,thick,in=-120,out=120] node[auto,scale=0.8,pos=0.4]
{$!co(p_f)$}(r1);

\node(s1)[state, right of=r0,xshift=3cm ]{$R$};
\node(s0)[state, below of=s1,yshift=-1cm]{$q_i$}
edge [->,thick] node[auto,pos=0.5,scale=0.8] {$?a$}(s1);
\node(s2) [state,left of=s1,xshift=-0.5cm]{$p_i$}
edge [<-,thick] node[sloped,above,scale=0.8] {$?pr(p_i)$}(s1);
\node(s3) [state,below left of=s1, yshift=-0.5cm]{$p_2$}
edge [<-,thick] node[sloped,above,scale=0.8] {$?pr(p_2)$}(s1);
\node(s4) [state,right of=s1,xshift=0.5cm]{$p_3$}
edge [<-,thick] node[sloped,above,scale=0.8] {$?pr(p_3)$}(s1);
\node(s5) [state,below right of=s1, yshift=-0.5cm]{$p_f$}
edge [<-,thick] node[sloped,above,scale=0.8] {$?pr(p_f)$}(s1);
\node(r6)[finalstate,below of=s0,yshift=-0.5cm]{$q_f$}
edge [<-,thick,in=-90,out=120] node[sloped,below,scale=0.8]
{$?co(p_2)$}(s3)
edge [<-,thick,in=-90,out=60] node[sloped,below,scale=0.8]
{$?co(p_f)$}(s5)
edge [<-,thick,in=-90,out=180] node[sloped,below,scale=0.8]
{$?co(p_i)$}(s2)
edge [<-,thick,in=-90,out=0] node[sloped,below,scale=0.8]
{$?co(p_3)$}(s4)
edge [<-,thick] node[auto,pos=0.8,scale=0.8]
{$?b$}(s0);

\node[petri-tok] at (0) {};
\end{tikzpicture}
}
\end{center}
\caption{A Petri net $\Net$ and its associated rendez-vous network $\Prot_\Net$}
\label{fig:petritordv}
\end{figure}

We build from $\Net$ a rendez-vous protocol $\Prot_\Net$ which admits
a cut-off if and only if $M' \in \Reach{M}$. The states of the  processes
in $\Prot_\Net$ are matched to the places of $\Net$, the number of
processes in a  state corresponding to  the number of tokens in
the associated place, and the leader is  in charge to move the
processes in order to simulate the changing on the number of tokens. The
protocol is equipped with an extra state $R$, the reserve state, where
the leader stores at the beginning of the simulation the number of
processes which will simulate the tokens: when a transition
produces a token in a place $p$, the leader moves a process from $R$
to $p$ and when it consumes a token from a place $p$, the  leader
moves a process from $p$ to $q_f$. Formally, we have: $\Prot_\Net=\tuple{Q,Q_P,Q_L,\Sigma,q_i,q_f,q^L_i,q^L_f,E}$ where:
\begin{itemize}
\item $Q_P = \set{q_i,q_f,R} \cup \set{p \mid p \in P}$,
\item  $Q_L = \set{q^L_i,q^L_f,q^L_s} \cup Q^{aux}_L$ (the states
  $Q^{aux}_L$ are extra states use by the leader while simulating
  transitions),
  \item $\Sigma=\set{a,b} \cup \set{co(p),pr(p) \mid p \in P}$,
\item $E \subseteq (Q_P \times \Rv{\Sigma} \times Q_P) \times (Q_L\times \Rv{\Sigma}  \times Q_L)$ is the smallest relation such that:
  \begin{itemize}
  \item $(q_i,?a,R) \in E$ and  $(q^L_i,!a,q^L_i) \in E$ (the leader send some processes in $R$),
  \item $(R,?pr(p),p) \in E$ and $(p,?co(p),q_f) \in E$  for all $p \in P$ (a production of a token moves a process from $R$ to $p$ and a consumption moves it from $p$ to $q_f$),
  \item $(q^L_i,!pr(p_i),q^L_s) \in E$ (the leader moves a process to $p_i$ and is in state $q^L_s$ where he simulates the transition),
  \item for each transition $t \in T$, there is in $E$ a sequence of edges: $(q^L_s,!co(p_1),q^t_1) \linebreak[0] (q^t_1,!co(p_2),q^t_2) ~~\ldots~~ \linebreak[0] (q^t_{k},!co(p_k),q^t_{k+1})(q^t_{k+1},!pr(p'_1),q^t_{k+2}) ~~\ldots ~~(q^t_{k+m-1}, \linebreak[0]!pr(p'_{m-1}) \linebreak[0],q^t_{k+m})(q^t_{k+m},!pr(p'_m),q^L_s)$ such that $Pre(t)=p_1+p_2+\ldots+p_k$ and $Post(t)=p'_1+p'_2+\ldots+p'_m$,
   \item  $(q_i,?b,q_f) \in E$ and $(q^L_s,?b,q^L_s) \in E$ (the leader can move the remaining processes in $q_f$),
   \item $(q^L_s,!co(p_f),q^L_f) \in E$ (the leader ends the simulation).
   \end{itemize}
\end{itemize}

Figure \ref{fig:petritordv}
provides an example of a Petri net and its associated
rendez-vous network. In this net, the transition letter $a$ is
used to put as many processes as necessary to simulate the number of
tokens in the places in the reserve state $R$. The letters $pr(p_j)$
are used to simulate the production of a token in the place $p_j$ by
moving a process from $R$ to $p_j$ and the letter $co(p_j)$ are used
to simulate the consumption of a token in the place $p_j$ by moving a
process from $p_j$ to $q_f$. It is then easy to see that each loop on the state $q^L_s$ simulates
a transition of the Petri net whereas the transition from $q^L_i$ to $q^L_s$
is used to build the initial marking and the transition from $q^L_s$
to $q^L_f$ is used to delete one token from the single place $p_f$
and move the corresponding process to $q_f$. Finally, the letter $b$ is used to ensure
the cutoff property by moving  from $q_i$ to $q_f$ the extra processes not needed to
simulate the tokens.
This construction ensures the following Lemma.

\begin{lemma}
  $M' \in \Reach{M}$ in $\Net$ iff there exists $B \in \nat$ such that for
  all $n \geq B$, we have $C^{(n)}_i \trans^\ast C^{(n)}_f$ in $\Net_\Prot$.
\end{lemma}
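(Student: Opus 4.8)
The plan is to prove the two implications separately, in both cases exploiting the structural fact that every edge of $\Prot_\Net$ pairs a leader answer $!x$ with a process request $?x$ (no process carries an answer and no leader state carries a request). Consequently the leader alone drives the whole network, and each rendez-vous moves exactly one process; the entire execution is thus governed by the leader's trajectory from $q^L_i$ to $q^L_f$.

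For the implication from reachability to the cut-off, I would fix a firing sequence $M = M_0 \ltransP{t_1} M_1 \cdots \ltransP{t_\ell} M_\ell = M'$ witnessing $M' \in \Reach{M}$, and set $B = 1 + \sum_{j=1}^\ell |Post(t_j)|$, the total number of tokens ever produced along this sequence (the leading $1$ accounting for the initial token of $M$ in $p_i$). For any $n \geq B$ I would build the execution $C^{(n)}_i \trans^\ast C^{(n)}_f$ as follows: the leader first performs $B$ rendez-vous $a$, moving $B$ processes from $q_i$ into the reserve $R$; it then takes $!pr(p_i)$ to reach $q^L_s$, installing the single process in state $p_i$ that represents $M_0 = M$. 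Each firing $M_{j-1} \ltransP{t_j} M_j$ is replayed by one loop at $q^L_s$: the $!co(\cdot)$ edges move the $Pre(t_j)$ witnessing processes from their place-states to $q_f$, and the $!pr(\cdot)$ edges pull $Post(t_j)$ fresh processes from $R$ into the produced places. Since $|M| = |M'| = 1$, the net token change is zero, so consumptions and productions balance and exactly the $B$ reserve processes are used up, leaving the place-states holding $M' = \mset{p_f}$. The $n-B$ processes left in $q_i$ are then moved to $q_f$ by rendez-vous $b$ at $q^L_s$, and the terminal $!co(p_f)$ consumes the last token and sends the leader to $q^L_f$, so the reached configuration is $C^{(n)}_f$.

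For the converse, I would fix one $n \geq B$ and an execution $C^{(n)}_i \trans^\ast C^{(n)}_f$, and read off a firing sequence from the leader's trajectory. Since $!a$ occurs only as a self-loop at $q^L_i$ and there is no edge back into $q^L_i$, this trajectory decomposes into an initial block of $!a$ moves, one $!pr(p_i)$, a sequence of transition-loops interleaved with $!b$ self-loops at $q^L_s$, and the terminal $!co(p_f)$. Defining $M_j \in \nat^P$ as the multiset of processes sitting in place-states right after the $j$-th completed loop (with $M_0$ taken just after entering $q^L_s$), I would check that a leader edge $!co(p)$ can occur only when some process occupies state $p$; hence the $Pre(t_j)$-block of the $j$-th loop is enabled precisely when $M_{j-1} \geq Pre(t_j)$, and replaying it yields $M_j = M_{j-1} - Pre(t_j) + Post(t_j)$, a genuine step $M_{j-1} \ltransP{t_j} M_j$, with $M_0 = M$. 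Finally, because $C^{(n)}_f$ leaves every place-state empty, the marking just before the terminal $!co(p_f)$ must be exactly $\mset{p_f}$, i.e. $M'$, which gives $M \transP^\ast M'$.

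The main obstacle is the structural decomposition in the converse direction: I must rule out that the leader abandons a transition simulation half-way, consuming part of $Pre(t_j)$ without producing the matching $Post(t_j)$ (or conversely), since such a partial loop corresponds to no Petri-net transition and would wreck the token-count bookkeeping. This is precisely what the construction prevents: the auxiliary states $q^t_1,\ldots,q^t_{k+m}$ are private to transition $t$ and each carries only the single outgoing edge of its loop, while $q^L_f$ is reachable solely by $!co(p_f)$ from $q^L_s$. Hence reaching $q^L_f$ forces every entered loop to be completed, so no partial firing can survive in the trajectory and the extracted sequence is a legitimate run of $\Net$.
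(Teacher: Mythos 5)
Your proof is correct and takes essentially the same route as the paper's own (sketched) argument: the forward direction simulates the firing sequence with cut-off equal to one plus the total number of produced tokens, moving surplus processes with $b$, and the converse reads a firing sequence off the leader's trajectory, whose shape (single outgoing edges from the auxiliary states, $q^L_f$ reachable only by $!co(p_f)$) forces every entered loop to be completed. One small remark: your structural premise that the leader only answers relies on the edge $(q^L_s,!b,q^L_s)$ as drawn in Figure \ref{fig:petritordv}; the paper's text writes $(q^L_s,?b,q^L_s)$, which is evidently a typo since otherwise no $b$ rendez-vous could ever fire.
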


\begin{sketch}
  If $M' \in \Reach{M}$, then the cut-off is equal to $N+1$ where $N$ is the number of tokens produced during the execution from $M$ to $M'$. The leader first brings $N+1$ processes to $R$ thanks to the rendez-vous $a$ (the processes which remains in $q_i$ will be moved later from $q_i$ to $q_f$ thanks to the rendez-vous $b$). Then the leader moves to $q^L_s$ putting one process in $p_i$ (corresponding to one token in $p_i$) and from this state it simulates one by one the transitions of the execution by taking the corresponding loop on $q^L_s$. Each such loop simulates in fact a transition as follows:  it first consumes the tokens of the transition (by making processes move from a state $p$ to $q_f$) and then produces the corresponding tokens (by making processes move from $R$ a place $p$). When the leader has simulated all the transitions of the run, no more processes are in $R$, one process is in $p_f$ and some processes are left in $q_i$, the leader first empties $q_i$ (thanks to $b$) and then it moves the last process in $p_f$ to $q_f$ going himself to $q^L_f$.

  Assume now that there exists $n$ such that $C_i^{(n)} \trans^\ast C^{(n)}_f$ in $\Prot_\Net$. Then such an execution  is necessarily at each  step a move of the leader and of one process. According to the shape of the leader edges, we deduce that after having put some processes in $R$, it moves to $q^L_s$ where it will take a certain number of times some of the loops and finally it will move to $q^L_f$. Following the reverse reasonning as above this allows us to retrieve in the Petri net an execution from $M$ to $M'$ (each loop taken from $q^L_s$ corresponding to a fired transition).
\end{sketch}

\medskip

We can hence obtain a hardness result for the C.O.P. thanks to the fact
that the reachability problem in Petri nets is non-elementary \cite{czerwinski-reachability-stoc-19}.

\begin{theorem}
  \label{thm:nonelem}
The C.O.P. is non-elementary.
\end{theorem}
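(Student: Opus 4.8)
The plan is to read the preceding Lemma as a many-one reduction from the reachability problem in Petri nets to the C.O.P., and to combine it with the non-elementary lower bound for reachability established in \cite{czerwinski-reachability-stoc-19}. Concretely, the Lemma shows that the map $(\Net,M,M') \mapsto \Prot_\Net$ sends a positive instance of reachability (with $M$ and $M'$ in the normal form fixed above) to a protocol admitting a cut-off, and a negative instance to a protocol admitting none. The paper already observed that restricting $M$ and $M'$ to this normal form does not alter the complexity of reachability, so no generality is lost, and deciding whether $\Prot_\Net$ has a cut-off is exactly deciding whether $M' \in \Reach{M}$.

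First I would verify that this reduction is computable within elementary resources, which is the only point requiring care. The state space of $\Prot_\Net$ consists of the places of $\Net$ together with a constant number of extra states ($q_i,q_f,R,q^L_i,q^L_f,q^L_s$), so $|Q_P|$ and the non-auxiliary part of $Q_L$ are linear in $|P|$. The auxiliary leader states $Q^{aux}_L$ arise from the edge sequences simulating transitions: each transition $t$ contributes a chain of length $|Pre(t)|+|Post(t)|$, with one edge per token consumed or produced. Hence the number of auxiliary states and edges, as well as the alphabet $\Sigma$, are bounded by $\sum_{t\in T}(|Pre(t)|+|Post(t)|)$ plus a linear term. This is polynomial in the size of $\Net$ under a unary encoding of arc weights, and at worst exponential -- thus still elementary -- under a binary encoding. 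In either case the reduction produces an output of elementary size in elementary time.

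Finally I would invoke the standard fact that an elementary reduction preserves non-elementary lower bounds: were the C.O.P. decidable in elementary time, reachability could be decided by first applying the reduction and then solving the C.O.P., yielding an elementary decision procedure and contradicting \cite{czerwinski-reachability-stoc-19}. I expect the only genuine obstacle to be the bookkeeping in the size estimate above, namely making explicit that it is the multiset sizes $|Pre(t)|$ and $|Post(t)|$, rather than the raw arc weights, that govern the lengths of the simulating chains on $q^L_s$; once this is stated, the rest of the argument is immediate from the Lemma.
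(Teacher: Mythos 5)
Your proposal is correct and follows exactly the paper's (implicit) argument: Theorem~\ref{thm:nonelem} is obtained by composing the reduction of the preceding Lemma with the non-elementary lower bound for Petri net reachability from \cite{czerwinski-reachability-stoc-19}. Your added bookkeeping on the size of $\Prot_\Net$ (chains of length $|Pre(t)|+|Post(t)|$, hence elementary output under either encoding) only makes explicit what the paper leaves unstated, and does not change the route.
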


\subsection{From rendez-vous protocols to Petri nets}
\label{sec:topetri}

We now show how to encode the behavior of a rendez-vous protocol into
a Petri net and give a reduction from the C.O.P. to a problem on the
built Petri net. We consider a rendez-vous protocol
$\Prot=\tuple{Q,Q_P,Q_L,\Sigma,q_i,q_f,\linebreak[0]q^L_i,q^L_f,E}$. From
$\Prot$, we build a Petri net  $\Net_\Prot=\tuple{P,T,Pre,Post}$ with
the following characteristics~:

\begin{itemize}
\item $P=\set{p_q \mid q \in Q}$,
\item $T = \set{t_i,t^L_f} \cup \set{t_{(q_1,q_2,a,q_1',q'_2)} \mid q_1,q_2,q_1',q_2' \in Q \mbox{ and } a \in \Sigma \mbox{ and } (q_1,!a,q_1'),\linebreak[0](q_2,?a,q'_2) \in E }$,
\item the precondition function $Pre$ is such that:
  \begin{itemize}
  \item $Pre(t_i)(p)=0$ for all $p \in P$,
  \item $Pre(t^L_f)(p_{q^L_f})=1$ and $Pre(t^L_f)(p)=0$ for all $p \in P \setminus \set{p_{q^L_f}}$,
  \item $Pre( t_{(q_1,q_2,a,q_1',q'_2)})(p_{q_1})=Pre( t_{(q_1,q_2,a,q_1',q'_2)})(p_{q_2})=1$ and \\$Pre( t_{(q_1,q_2,a,q_1',q'_2)})(p) =0$ for all $p \in P \setminus \set{p_{q_1},p_{q_2}}$,
  \end{itemize}
 \item the postcondition function $Post$ is such that:
  \begin{itemize}
  \item $Post(t_i)(p_{q_i})=1$ and $Post(t_i)(p)=0$ for all $p \in P \setminus \set{p_{q_i}}$,
     \item $Post(t^L_f)(p)=0$ for all $p \in P$,
  \item $Post( t_{(q_1,q_2,a,q_1',q'_2)})(p_{q'_1})=Post( t_{(q_1,q_2,a,q_1',q'_2)})(p_{q'_2})=1$ and \\$Post( t_{(q_1,q_2,a,q_1',q'_2)})(p) =0$ for all $p \in P \setminus \set{p_{q'_1},p_{q'_2}}$.
  \end{itemize}
\end{itemize}

Intuitively in $\Net_\Prot$, we have a place for each state of $\Prot
$, the transition $t_i$ puts tokens corresponding to new
processes in the place corresponding to the initial state $q_i$, the
transition $t^L_f$ consumes a token in the place corresponding to the
final state of the leader $q^L_f$ and each transition
$t_{(q_1,q_2,a,q_1',q'_2)}$ simulates the
protocol respecting the associated semantics (it checks that there is
one process in $q_1$ another one in $q_2$ and that they can communicate thanks to the communication letter $a
\in \Sigma $ moving to $q'_1$ and $q'_2$). Figure \ref{fig:rdvtopetri} represents  the Petri net
$\Net_\Prot$ for the protocol $\Prot$ of Figure \ref{fig:ex1} (the
transitions are only labeled with the letter of the rendez-vous).

\begin{figure}[htbp]
\begin{center}
\scalebox{1}{
\begin{tikzpicture}[node distance=2cm]
\tikzstyle{every state}=[inner sep=3pt,minimum size=20pt]
\node(0)[petri-p,draw=black,label=90:{\small $p_{q_f}$}]{};
\node(1)[petri-p,below of=0,draw=black,label=-90:{\small $p_{q_i}$}]{};
\node(2)[petri-p,below of=1,draw=black,label=180:{\small $p_{q}$}]{};

\node(3)[petri-p,draw=black,right of=0,xshift=2cm,label=90:{\small $p_{q^L_f}$}]{};
\node(4)[petri-p,below of=3,draw=black,label=0:{\small $p_{q^L_i}$}]{};
\node(5)[petri-p,below of=4,draw=black,label=0:{\small $p_{q^L}$}]{};

\node(ti)[petri-t,draw=black,fill=black,left of=0,label=180:{\small $t_i$}]{}
edge [->,thick](1);
\node(t0)[petri-t,draw=black,fill=black,left of=1,label=180:{\small $d$}]{}
edge [<-,thick]node[auto] {$2$}(1)
edge [->,thick](0)
edge [->,thick](2);

\node(t1)[petri-t,draw=black,fill=black,right of=0,label=90:{\small $c$}]{}
edge [->,thick](0)
edge [->,thick](3)
edge [<-,thick](1)
edge [<-,thick](4);

\node(t2)[petri-t,draw=black,fill=black,right of=1,yshift=1cm,label=90:{\small $a$}]{}
edge [->,thick](1)
edge [<-,thick](4)
edge [<-,thick](2)
edge [->,thick](5);

\node(t3)[petri-t,draw=black,fill=black,right of=2,label=-90:{\small $a$}]{}
edge [<-,thick,in=20,out=170](2)
edge [->,thick,in=-20,out=-170](2)
edge [<-,thick](4)
edge [->,thick](5);

\node(t4)[petri-t,draw=black,fill=black,below of=t2,yshift=0.5cm,label=180:{\small $b$}]{}
edge [<-,thick](2)
edge [<-,thick](5)
edge [->,thick,in=-90,out=150](0)
edge [->,thick](4)
;
\node(tf)[petri-t,draw=black,fill=black,right of=3,label=0:{\small $t^L_f$}]{}
edge [<-,thick](3);

\node[petri-tok] at (4) {};
\end{tikzpicture}
}
\end{center}
\caption{The Petri net $\Net_\Prot$ for the protocol $\Prot$ of Figure \ref{fig:ex1}}
\label{fig:rdvtopetri}
\end{figure}

Unfortunately we did not find a way to reduce directly the C.O.P. to
the reachability problem in Petri nets which would have lead directly
to the decidability of C.O.P. However we will see how the C.O.P. on
$\Prot$ can lead to a decision  problem  on $\Net_\Prot $. We consider
the initial marking $M_0 \in \nat^P$ such that $M_0(p_{q^L_i})=1$ and
$M_0(p)=0$ for all $p \in P \setminus \set{p_{q^L_i}}$ and the family
of markings $(M^{(n)}_f)_{\set{n \in \nat}}$ such that
$M^{(n)}_f(p_{q_f})=n$ and $M^{(n)}_f(p)=0$ for all $p \in P \setminus
\set{p_{q_f}}$. From the way we build the Petri net $\Net_\Prot$, we
deduce  the following lemma:

\begin{lemma}
  \label{lem:reductionCOP}
  For all $n \in \nat$, $C^{(n)}_i \trans^\ast C^{(n)}_f$ in
  $\Prot$ iff $M^{(n)}_f \in \Reach{M_0}$ in $\Net_\Prot$.
\end{lemma}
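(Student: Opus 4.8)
The plan is to realise $\Net_\Prot$ as a faithful simulator of $\Prot$ through the encoding $\iota:\Confs\to\nat^P$ defined by $\iota(C)(p_q)=C(q)$ for all $q\in Q$. This $\iota$ is a bijection between configurations and markings, sending $C^{(n)}_i$ to the marking with $n$ tokens in $p_{q_i}$ and one in $p_{q^L_i}$, and $C^{(n)}_f$ to the marking with $n$ tokens in $p_{q_f}$ and one in $p_{q^L_f}$; note also $M_0=\iota(C^{(0)}_i)$. The core is a one-step simulation claim: for $C,C'\in\Confsn{n}$ we have $C\trans C'$ iff $\iota(C)\ltransP{t}\iota(C')$ for some rendez-vous transition $t=t_{(q_1,q_2,a,q'_1,q'_2)}$ with $(q_1,!a,q'_1),(q_2,?a,q'_2)\in E$. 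This is a direct unfolding of the two definitions: such a $t$ has $Pre(t)=p_{q_1}+p_{q_2}$ and $Post(t)=p_{q'_1}+p_{q'_2}$, so the firability condition $\iota(C)\ge Pre(t)$ coincides with the protocol's enabling side condition on $C$, and $\iota(C)-Pre(t)+Post(t)=\iota\bigl(C-(q_1+q_2)+(q'_1+q'_2)\bigr)$ reproduces the protocol update. The one point requiring care is the multiplicity bookkeeping when $q_1=q_2$ (or $q'_1=q'_2$), where $Pre(t)$ (resp.\ $Post(t)$) must be read as the multiset $p_{q_1}+p_{q_2}$.

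For the forward implication I would assemble a canonical firing sequence in three blocks. First, since $t_i$ has empty precondition and deposits a token in $p_{q_i}$, firing it $n$ times gives $M_0\transP^\ast\iota(C^{(n)}_i)$. Second, a protocol run $C^{(n)}_i\trans^\ast C^{(n)}_f$ lifts verbatim, by the one-step claim, to a sequence of rendez-vous transitions $\iota(C^{(n)}_i)\transP^\ast\iota(C^{(n)}_f)$. Third, $t^L_f$ consumes the single token in $p_{q^L_f}$ and produces nothing, so one firing sends $\iota(C^{(n)}_f)$ to $M^{(n)}_f$. Concatenating the three blocks yields $M^{(n)}_f\in\Reach{M_0}$.

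The converse is where the work lies, because a witnessing run $M_0\transP^\ast M^{(n)}_f$ may interleave $t_i$ and $t^L_f$ arbitrarily with the rendez-vous transitions, and an arbitrary marking need not be $\iota$ of a legal configuration (which must carry exactly one leader token). I would first establish two invariants along any run from $M_0$: rendez-vous transitions respect the partition of $E$ into process- and leader-edges and hence preserve the number of tokens on leader places, $t_i$ ignores leader places, and only $t^L_f$ removes a leader token, so the leader count stays $1$ until the first $t^L_f$ and is $0$ afterwards; and the total token count equals $1+\#t_i-\#t^L_f$. Since $M^{(n)}_f$ has no leader token and $n$ process tokens, this forces exactly one firing of $t^L_f$ and exactly $n$ firings of $t_i$. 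Next I would normalise the sequence: $t_i$ has empty precondition so it never disables anything and commutes to the front, while $t^L_f$ acts only on $p_{q^L_f}$, a place touched by none of the transitions that can still fire once the leader is gone, so it commutes to the very end. The normalised run reads $M_0\transP^\ast\iota(C^{(n)}_i)$ (the $n$ copies of $t_i$), then a block of rendez-vous transitions whose intermediate markings all carry exactly one leader token, then the single $t^L_f$; the marking just before $t^L_f$ is forced to be $\iota(C^{(n)}_f)$ since $t^L_f$ merely deletes the $p_{q^L_f}$ token. Applying the one-step claim backwards across the middle block yields $C^{(n)}_i\trans^\ast C^{(n)}_f$.

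I expect the normalisation in the converse to be the main obstacle. The two commutation arguments are routine but must be spelled out (one uses that $t_i$ is always enabled and purely additive, the other that $t^L_f$ has empty postset and a preset disjoint from the support of every remaining transition), and, most importantly, the step-by-step use of the one-step claim is legitimate only because the leader-token invariant guarantees that each intermediate marking of the middle block is genuinely $\iota$ of a configuration in $\Confsn{n}$. With these ingredients both directions combine to give the stated equivalence for every $n\in\nat$.
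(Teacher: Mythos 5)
Your proof is correct and follows the same direct-simulation route that the paper takes for granted: the paper states this lemma as an immediate consequence of the construction of $\Net_\Prot$ (with details deferred to its appendix), namely the correspondence between configurations and markings, protocol steps and rendez-vous transitions, with $t_i$ creating processes and $t^L_f$ absorbing the leader. You also correctly flag the two points the paper glosses over: the precondition must be read as the multiset $p_{q_1}+p_{q_2}$ (arc weight $2$ when $q_1=q_2$, as drawn in Figure \ref{fig:rdvtopetri}), and the converse needs the commutation of $t_i$ firings to the front and of $t^L_f$ to the end so that the intermediate markings are images of genuine configurations.
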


This leads us to propose a cut-off problem for Petri nets, which asks
whether given an initial marking and a specific place, there exists a
bound $B \in \nat$ such that for all $n \geq B$ it is possible to
reach a marking with $n$ tokens in the specific place and none in the
other. This \textbf{single place cut-off problem (single place C.O.P.)} can be stated formally as follows:
\begin{itemize}
\item \textbf{Input:} A Petri net $\Net$, an initial marking $M_0$ and a place $p_f$;
\item \textbf{Output:} Does there exist $B \in \nat$ such that for
  all $n \geq B$, we have $M^{(n)} \in \Reach{M_0}$ in $\Net$ where $M^{(n)}$ is the marking verifying $M^{(n)}(p_f)=n$ and $M^{(n)}(p)=0$ for all $p \in P\setminus\set{p_f}$?
\end{itemize}

Thanks to Lemma \ref{lem:reductionCOP}, we can then conclude the following proposition which justifies the introduction of the single place C.O.P. in our context.

\begin{proposition}
  \label{prop:COPtosingleplace}
  The C.O.P. reduces to the single place C.O.P.
  
\end{proposition}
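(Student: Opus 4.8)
The plan is to take the construction of $\Net_\Prot$ from the previous subsection and use it verbatim as the reduction. Given a rendez-vous protocol $\Prot$, I would output the single place C.O.P. instance $(\Net_\Prot, M_0, p_{q_f})$, where $\Net_\Prot$, the initial marking $M_0$ and the family $(M^{(n)}_f)_{n \in \nat}$ are exactly those fixed just before Lemma \ref{lem:reductionCOP}, and where the distinguished place is $p_{q_f}$. First I would note that this output is computable from $\Prot$ (in fact in polynomial time), since the sets $P$ and $T$ and the functions $Pre$, $Post$, together with $M_0$, are all obtained by a direct finite enumeration over the states and edges of $\Prot$.

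For correctness, the key observation is that the choice $p_f := p_{q_f}$ makes the two families of target markings coincide. The marking $M^{(n)}$ appearing in the statement of the single place C.O.P., namely the one with $M^{(n)}(p_{q_f}) = n$ and $M^{(n)}(p) = 0$ for every other place, is literally the marking $M^{(n)}_f$ of Lemma \ref{lem:reductionCOP}. Applying that lemma for each fixed $n$ therefore yields that $C^{(n)}_i \trans^\ast C^{(n)}_f$ holds in $\Prot$ if and only if $M^{(n)} = M^{(n)}_f \in \Reach{M_0}$ holds in $\Net_\Prot$. Since this equivalence is established separately for every value of $n$, it lifts immediately to the two quantified cut-off statements: for any fixed bound $B \in \nat$, the protocol satisfies $C^{(n)}_i \trans^\ast C^{(n)}_f$ for all $n \geq B$ precisely when $M^{(n)} \in \Reach{M_0}$ for all $n \geq B$. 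Hence $\Prot$ admits a cut-off if and only if $(\Net_\Prot, M_0, p_{q_f})$ is a positive instance of the single place C.O.P.

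There is essentially no hard step here, as all the combinatorial content has already been packaged into Lemma \ref{lem:reductionCOP}. The only points requiring a moment of care are to check that the \emph{pure} target markings demanded by the single place C.O.P. (exactly $n$ tokens in $p_{q_f}$ and none elsewhere) are indeed the $M^{(n)}_f$ of the lemma, and to make explicit that one and the same witnessing bound $B$ transfers across the per-$n$ equivalence. Making both observations explicit completes the reduction.
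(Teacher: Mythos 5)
Your proposal is correct and is exactly the paper's argument: the paper introduces the instance $(\Net_\Prot, M_0, p_{q_f})$ and the markings $(M^{(n)}_f)_{n\in\nat}$ right before Lemma \ref{lem:reductionCOP}, and then derives the proposition as an immediate consequence of that lemma, just as you do by identifying $M^{(n)}$ with $M^{(n)}_f$ and transferring the quantifier over $n \geq B$ across the per-$n$ equivalence. Your added remarks on computability of the construction and on the coincidence of the target markings are sound and merely make explicit what the paper leaves implicit.
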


\section{Solving C.O.P. in the general case}
\label{sec:general}

We show how to solve the C.O.P. by solving the single place C.O.P.
To the best of our knowledge this latter problem has not yet been studied and we do not see direct connections with existing studied problems on Petri nets. It amounts to check if for some $B \in \nat$ we have  $\set{M \in \nat^P \mid M(p)=0 \mbox{ for all } p \in P\setminus\set{p_f} \mbox{ and } M(p_f)\geq B} \subseteq \Reach{M_0}$. We know from \cite{kleine-projections-tcs89} that the projection of the reachability set on the single place $p_f$ is semilinear (that can be represented by a Presburger arithmetic formula), however this does not help us since we furthermore require the other places different from $p_f$ to be empty.

\subsection{Formal tools and associated results}

For $\mathbf{P},\mathbf{P}' \subseteq \nat^n$, we let $\mathbf{P} +\mathbf{P}'=\set{p + p' \mid p\in \mathbf{P} \mbox{ and } p' \in \mathbf{P}'}$ and we shall sometimes identify an element $p \in \nat^n$ with the singleton $\set{p}$.
A  subset $\mathbf{P}$ of $\nat^n$ for $n>0$ is said to be \emph{periodic} iff $\mathbf{0} \in \mathbf{P}$ and $\mathbf{P} + \mathbf{P} \subseteq \mathbf{P}$. Such a periodic set $\mathbf{P}$ is \emph{finitely generated} if there exists a finite set of elements $\set{\vect{p}_1,\ldots,\vect{p}_k} \subset \nat^n$ such that $\mathbf{P}=\set{\lambda_1.\vect{p}_1+\ldots+\lambda_k.\vect{p_k} \mid \lambda_i \in \nat \mbox{ for all } i \in [1,k]}$. A \emph{semilinear set} of $\nat^k$ is then a finite union of sets of the form $\vect{b} + \mathbf{P}$ where $\vect{b} \in \nat^k$ and $\mathbf{P}$ is finitely generated. Semilinear sets are particularly useful tools because they are closed under the classical operations (union, complement and projection) and they  provide a finite representation  of infinite sets of vectors of naturals. Furthermore they can be represented by logical formulae expressed in Presburger arithmetic which  is  the decidable first-order theory of natural numbers with addition. A formula $\phi(x_1,\ldots,x_k)$ of Presburger arithmetic with free variables $x_1,\ldots,x_k$ defines a set $\Interp{\phi} \subseteq \nat^k$ given by $\set{\vect{v} \in \nat^k \mid \vect{v}\models \phi}$ (here $\models$ is the classical satisfiability relation for Presburger arithmetic and it holds true if the formula holds when replacing each $x_i$ by $\vect{v}[i]$). In \cite{ginsburg-semigroups-66}, it was proven that a set $S \subseteq \nat^k$ is semilinear iff there exists a Presburger formula $\phi$ such that $S = \Interp{\phi}$. Note that the set $\set{M \in \nat^P \mid M(p)=0 \mbox{ for all } p \in P\setminus\set{p_f}}$ has  a single interesting component, the other being $0$. In \cite{kleine-projections-tcs89}, to prove that the projection of the reachability set of a Petri net on a single place is semilinear, the authors need the following lemma.

\begin{lemma}\cite{kleine-projections-tcs89}\label{lem:onedim:semilinear}
Let $S \subseteq \nat$. If there exist $m,t \in \nat$ such that for all $s \in S$, $s \geq m$ implies $s+t \in S$, then $S$  is semilinear.
\end{lemma}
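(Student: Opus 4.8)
The plan is to split $S$ into its part below the threshold $m$ and its tail above it, and to treat each piece separately. The finite set $S \cap [0,m-1]$ is trivially semilinear, being a finite union of singletons, each of the form $\vect{b} + \set{0}$. It therefore suffices to prove that the tail $\set{s \in S \mid s \geq m}$ is semilinear. I will in fact exhibit it as a finite union of arithmetic progressions, each of which is a linear set in the sense defined above: a progression $c + \set{\lambda t \mid \lambda \in \nat}$ is $\vect{b} + \mathbf{P}$ with $\vect{b} = c$ and $\mathbf{P}$ the periodic set generated by the single element $t$. Throughout I assume $t \geq 1$, the period being genuine; for $t=0$ the hypothesis is vacuous and imposes no structure, so this is the intended reading.

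The key step is to iterate the hypothesis. If $s \in S$ with $s \geq m$, then $s+t \in S$; since $s+t \geq m$ as well, applying the hypothesis again gives $s+2t \in S$, and by induction $\set{s+kt \mid k \in \nat} \subseteq S$. Thus every tail element drags behind it an entire progression of step $t$ lying in $S$. I then classify the tail elements by residue modulo $t$: for each $r \in \set{0,\ldots,t-1}$ set $A_r = \set{s \in S \mid s \geq m \mbox{ and } s \equiv r \pmod{t}}$, so that the tail is $\bigcup_{r=0}^{t-1} A_r$. For each nonempty $A_r$ let $s_r = \min A_r$. On one hand, the iteration above shows $\set{s_r + kt \mid k \in \nat} \subseteq A_r$, since each such term lies in $S$, is $\geq m$, and is $\equiv r$. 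On the other hand, any $a \in A_r$ satisfies $a \geq s_r$ and $a \equiv s_r \pmod{t}$, so $a - s_r$ is a nonnegative multiple of $t$ and $a \in \set{s_r + kt \mid k \in \nat}$. Hence $A_r = \set{s_r + kt \mid k \in \nat}$, a single arithmetic progression, i.e. a linear set.

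Putting the pieces together, the tail is a union of at most $t$ linear sets, hence semilinear, and $S = (S \cap [0,m-1]) \cup \bigcup_{r=0}^{t-1} A_r$ is semilinear as a finite union of semilinear sets. The only real content of the argument is the second step, where the upward closure ``$s \in S$ and $s \geq m$ imply $s+t \in S$'' is exactly what forces each residue class above $m$ to be gap-free, that is, to contain every admissible value past its minimum rather than a sparser set; without this closure the residue classes could be arbitrary and semilinearity would genuinely fail. The one point requiring care is therefore the degenerate period $t=0$, which must be excluded (or read as $t \geq 1$) since it trivialises the hypothesis.
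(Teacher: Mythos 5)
Your proof is correct. Note that the paper itself contains no proof of this lemma: it is imported from the cited reference \cite{kleine-projections-tcs89} and used as a black box, so there is no in-paper argument to compare against; your writeup supplies exactly the missing elementary content. Your route is the natural one and every step is sound: the part of $S$ below $m$ is finite, hence a finite union of singletons and semilinear; iterating the closure hypothesis gives $\set{s+kt \mid k \in \nat} \subseteq S$ for every $s \in S$ with $s \geq m$; and partitioning the tail into residue classes $A_r$ modulo $t$, the two inclusions you prove (the progression from $\min A_r$ sits inside $A_r$, and conversely every element of $A_r$ lies on that progression because it is $\geq \min A_r$ and congruent to it) identify each nonempty $A_r$ with a single linear set, so $S$ is a finite union of linear sets. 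Your caveat about $t=0$ is also a genuine observation rather than pedantry: with $t=0$ the hypothesis is satisfied by \emph{every} subset of $\nat$, so the statement as literally written would assert that every subset of $\nat$ is semilinear, which is false; the lemma must be read with $t \geq 1$, and this is consistent with its only use in the paper, where Lemma~\ref{lem:onedim:periodic} instantiates $t$ with the minimal \emph{strictly positive} element of the periodic set.
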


\noindent This allows us to deduce the following result on periodic subsets of $\nat$.

\begin{lemma}
 \label{lem:onedim:periodic} 
Every periodic subset $\mathbf{P} \subseteq \nat$ is semilinear.
\end{lemma}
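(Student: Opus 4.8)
The plan is to reduce everything to Lemma \ref{lem:onedim:semilinear} by exhibiting the requisite constants $m,t \in \nat$ directly from the periodicity hypothesis. Since we are in dimension $n=1$, a periodic set $\mathbf{P} \subseteq \nat$ satisfies $0 \in \mathbf{P}$ and is closed under addition, and this closure is exactly what produces the eventual invariance needed by Lemma \ref{lem:onedim:semilinear}.

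First I would split on whether $\mathbf{P}$ contains a nonzero element. If $\mathbf{P} = \set{0}$, then $\mathbf{P}$ is finite and hence trivially semilinear, so this case is immediate. Otherwise, I would fix some $t \in \mathbf{P}$ with $t > 0$ (for concreteness, the least positive element of $\mathbf{P}$, though any positive element works). The key observation is then that for every $s \in \mathbf{P}$ we have $s + t \in \mathbf{P} + \mathbf{P} \subseteq \mathbf{P}$, using closure under addition. Thus the hypothesis of Lemma \ref{lem:onedim:semilinear} holds with this $t$ and with $m = 0$: for all $s \in \mathbf{P}$, the (vacuously satisfied) condition $s \geq m$ implies $s + t \in \mathbf{P}$. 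Lemma \ref{lem:onedim:semilinear} then yields that $\mathbf{P}$ is semilinear, which concludes the argument.

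There is essentially no hard step here; the whole content is recognizing that the additive-closure condition $\mathbf{P} + \mathbf{P} \subseteq \mathbf{P}$ specializes, once we pick a single positive witness $t \in \mathbf{P}$, to precisely the ``$s + t \in \mathbf{P}$'' invariance that Lemma \ref{lem:onedim:semilinear} consumes. The only point requiring a moment of care is the degenerate case $\mathbf{P} = \set{0}$, where no positive $t$ exists; handling it separately (as a finite, hence semilinear, set) keeps the main case clean. I would therefore expect the proof to be a short two-case argument with the bulk of the work delegated to the previously stated lemma.
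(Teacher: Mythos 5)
Your proof is correct and follows essentially the same route as the paper: both reduce to Lemma \ref{lem:onedim:semilinear} after handling the degenerate case, by picking a positive element of $\mathbf{P}$ and using $\mathbf{P}+\mathbf{P}\subseteq\mathbf{P}$ to get the required invariance (the paper takes the threshold $m$ equal to the period itself, while you take $m=0$, a cosmetic difference).
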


\begin{proof}
If $\mathbf{P} = \emptyset$ or $\mathbf{P}=\set{0}$ then it is semilinear. Otherwise, let $m$ be the minimal strictly positive element of $\mathbf{P}$. Then for any $s \in \mathbf{P}$ such that $s \geq m$, since $\mathbf{P}$ is periodic,  we have $s+m \in \mathbf{P}$. By Lemma \ref{lem:onedim:semilinear}, we get that $\mathbf{P}$ is semilinear.
\end{proof}

We now recall some connections between Petri nets and semilinear sets. Let $\Net=\tuple{P,T,Pre,\linebreak[0]Post}$ be a Petri net with $P=\set{p_1,\ldots,p_k}$, this allows us to look at the markings as elements of $\nat^k$ or of $\nat^P$. Given a language of finite words of transitions $L \subseteq T^\ast$ and a marking $M$, let $\Reach{M,L}$ be the reachable markings produced by $L$ from $M$ defined by $\set{M' \subseteq \nat^k \mid \exists w \in L \mbox{ such that } M \ltransP{w} M'}$ where we extend in the classical way the relation $\transP$ over words of transitions by saying $M \ltransP{\varepsilon} M$ and if $w=t.w'$, we have $M \ltransP{w} M'$ iff there exists $M''$ such that $M \ltransP{t} M''\ltransP{w'} M'$. A flat expression of transitions is a regular expression over $T$ of the form $T_1T_2\ldots T_\ell$ where each $T_i$ is either a finite word in $T^\ast$ or of the form $w^\ast$ with $w \in T^\ast$. For a flat expression $FE$, we denote by $L(FE)$ its associated language. In \cite{fribourg-petri-wflp00}, the following result relating flat expressions of transitions and their produced reachability set is given (it has then been extended to more complex systems \cite{finkel-compose-fsttcs02}).

\begin{proposition}\cite{fribourg-petri-wflp00}\label{prop:flatpetri}
 Let $\Net=\tuple{P,T,Pre,Post}$ be a Petri net, $FE$ a flat expression of transitions and $M \in \nat^P$ a marking. Then $\Reach{M,L(FE)}$ is semilinear  (and the corresponding Presburger formula can be computed).
\end{proposition}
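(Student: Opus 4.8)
The plan is to prove, by induction on the number $\ell$ of blocks $T_1,\dots,T_\ell$ of the flat expression, a slightly stronger statement: for every \emph{semilinear} set $S \subseteq \nat^P$ of markings and every flat expression $FE$, the set $\Reach{S,L(FE)} := \bigcup_{M \in S}\Reach{M,L(FE)}$ is semilinear, and a Presburger formula for it is computable from $S$ and $FE$. Since a single marking is a (trivially semilinear) singleton, this yields the proposition. The base case $\ell = 0$, where $L(FE)=\set{\varepsilon}$, gives $\Reach{S,L(FE)} = S$. For the inductive step I would write $FE = T_1 \cdot FE'$ with $FE'$ of length $\ell-1$ and use $\Reach{S,L(FE)} = \Reach{\Reach{S,L(T_1)},L(FE')}$; by the induction hypothesis it then suffices to show that applying a single block $T_1$ to a semilinear set produces a computable semilinear set. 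So everything reduces to two elementary operators.

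First suppose $T_1 = w$ is a plain word $t_1\cdots t_n \in T^\ast$. Firing $w$ is an affine partial function: letting $\delta_j$ be the net effect of the prefix $t_1\cdots t_j$ (with $\delta_0 = \vect{0}$), a marking $N$ enables $w$ iff $N \geq \vect{g}_w$, where $\vect{g}_w(p) = \max_{1 \le j \le n}\bigl(Pre(t_j)(p) - \delta_{j-1}(p)\bigr)$, and then $N \ltransP{w} N + \delta_n$. The enabling domain $U_w = \set{N \mid N \geq \vect{g}_w}$ is upward-closed, hence semilinear, so $\Reach{S,\set{w}}$ is the translation of $S \cap U_w$ by the integer vector $\delta_n$; it is semilinear by closure of semilinear sets under intersection and translation, and it lies in $\nat^P$ because enabledness guarantees nonnegativity along the whole firing. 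A Presburger formula is obtained directly.

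The real work, and the step I expect to be the main obstacle, is the case $T_1 = w^\ast$, where the number of iterations is unbounded and the admissible iteration counts depend on the starting marking. The key observation is that, since every copy of $w$ has the same net effect $\delta_n$ and the same enabling threshold $\vect{g}_w$, the iterate $w^k$ is enabled at $N$ exactly when $N + i\,\delta_n \geq \vect{g}_w$ for all $0 \le i \le k-1$, and then $N \ltransP{w^k} N + k\,\delta_n$. Hence
\[
\Reach{S,L(w^\ast)} = \set{N + k\,\delta_n \mid N \in S,\ k \in \nat,\ \forall i\,(0 \le i \le k-1 \Rightarrow N + i\,\delta_n \geq \vect{g}_w)}.
\]
This set is definable by a Presburger formula: membership of $N$ in the semilinear set $S$ is Presburger; the quantities $k\,\delta_n$ and $i\,\delta_n$ are multiplications of the variables $k,i$ by the \emph{fixed} integer vector $\delta_n$ (after rearranging the inequalities so that all terms are nonnegative, each component is a constant-scalar product, hence expressible); and the bounded universal quantifier over $i$ together with the existential quantifiers over $N$ and $k$ keep us inside first-order Presburger arithmetic. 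Since Presburger arithmetic is effectively closed under quantification and its definable sets are exactly the semilinear sets, the resulting set is semilinear and its representation is computable, which completes the induction.
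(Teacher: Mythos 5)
Your proposal is correct. Note that the paper does not prove this proposition at all: it is imported as a black box from \cite{fribourg-petri-wflp00}, so there is no internal proof to compare against. Your argument is, in essence, the standard loop-acceleration proof that underlies the cited result: decompose the flat expression block by block, observe that a fixed word $w$ acts as a partial affine map with an upward-closed enabling domain $\set{N \mid N \geq \vect{g}_w}$, and accelerate $w^\ast$ by noting that $k$ iterations from $N$ are enabled iff $N + i\,\delta_n \geq \vect{g}_w$ for all $0 \leq i \leq k-1$, yielding $N + k\,\delta_n$ --- all of which is expressible in Presburger arithmetic since $k$ and $i$ are only ever multiplied by fixed constant vectors. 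The two points that need care are both handled: strengthening the induction hypothesis from a single marking to an arbitrary (effectively given) semilinear set $S$, which is what makes the composition $\Reach{S,L(FE)} = \Reach{\Reach{S,L(T_1)},L(FE')}$ go through; and the sign issues caused by negative components of $\delta_n$ and of $\vect{g}_w$, which disappear after moving negative terms across the inequalities (and which are harmless for membership in $\nat^P$, since enabledness forces all intermediate markings to be nonnegative).
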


\subsection{Deciding if a bound is a single-place cut-off}

We prove that if one provides a bound $B \in \nat $, we are able to decide whether it corresponds to a cut-off as defined in the single place C.O.P. Let $\Net=\tuple{P,T,Pre,Post}$ be  a Petri net with an initial marking $M_0 \in \nat^P$, a specific place $p_f \in P$ and a bound $B \in \nat $. We would like to decide whether the following inclusion holds $\set{M \in \nat^P \mid M(p)=0 \mbox{ for all } p \in P\setminus\set{p_f} \mbox{ and } M(p_f)\geq B} \subseteq \Reach{M_0}$. An important point to decide this inclusion lies in the fact that the set $\set{M \in \nat^P \mid M(p)=0 \mbox{ for all } p \in P\setminus\set{p_f} \mbox{ and } M(p_f)\geq B}$ is semilinear and this allows us to use a method similar to the one proposed in \cite{jancar-cofinite-pn18} to check whether the reachability set of a Petri net equipped with a semilinear set of initial markings is universal. One key point is the following result which is a reformulation of a Lemma in  \cite{leroux-presburger-lics13}. This result was originally stated for Vector Addition System with States (VASS), but it is well known that a Petri net can be translated into a VASS with an equivalent reachability set.

\begin{proposition}\cite[Theorem 1]{jancar-cofinite-pn18}
  \label{prop:semilinear:flat}
 Let $\Net=\tuple{P,T,Pre,Post}$ be a Petri net, $M \in \nat^P$ a marking and $S \subseteq \nat^P$ a semilinear set of markings. If $S \subseteq \Reach{M}$ then there is a flat expression $FE$ of transitions such that $S \subseteq \Reach{M,L(FE)}$.
\end{proposition}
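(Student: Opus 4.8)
The plan is to reduce the statement for an arbitrary semilinear $S$ to the case of a single linear set, and then to invoke the geometric theory of reachability sets of vector addition systems. First I would write $S = \bigcup_{j=1}^{\ell} (\vect{b}_j + \mathbf{P}_j)$ as a finite union of linear sets, where $\mathbf{P}_j$ is generated by $\vect{p}_{j,1},\ldots,\vect{p}_{j,k_j}$. Suppose we have already produced, for each $j$, a flat expression $FE_j$ with $\vect{b}_j + \mathbf{P}_j \subseteq \Reach{M,L(FE_j)}$. I would first put each $FE_j$ into an all-starred form: replacing every fixed-word factor $u$ of $FE_j$ by $u^\ast$ keeps the expression flat and only enlarges its language (hence its reachable set), so without loss of generality $FE_j = v_{j,0}^\ast v_{j,1}^\ast \cdots v_{j,m_j}^\ast$, and in particular $\varepsilon \in L(FE_j)$. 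Then the single concatenation $FE := FE_1 FE_2 \cdots FE_\ell$ is again flat, and to reach a target of $\vect{b}_j + \mathbf{P}_j$ one selects the exponents of the block $FE_j$ that realise it while taking every factor of the other blocks zero times; since every skipped factor is starred this is permitted, and the word actually executed from $M$ is exactly the chosen word of $L(FE_j)$. Hence $S \subseteq \Reach{M,L(FE)}$, and it suffices to treat one linear set.

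For a single linear set $\vect{b} + \mathbf{P}$ with $\mathbf{P}$ generated by $\vect{p}_1,\ldots,\vect{p}_k$, the flat expression I would aim for has the shape $w_0^\ast c_1^\ast \cdots c_k^\ast$. The base word $w_0$ is easy: since $\vect{b} \in S \subseteq \Reach{M}$, there is a run $M \ltransP{w_0} \vect{b}$. The difficulty lies entirely in the cycles, i.e. in finding words $c_i$ that can be iterated arbitrarily often from (a marking reached after) $\vect{b}$ and whose displacement is exactly $\vect{p}_i$, so that $w_0 c_1^{\lambda_1}\cdots c_k^{\lambda_k}$ realises $\vect{b} + \sum_i \lambda_i \vect{p}_i$ for every choice of the $\lambda_i$. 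By Proposition \ref{prop:flatpetri} any $\Reach{M,L(FE)}$ is semilinear, so such a scheme is a priori consistent with $S$ being semilinear; the genuine question is existence.

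The main obstacle, and the deep part of the argument, is precisely this realisation of the period generators by enabled, iterable cycles. A generator $\vect{p}_i$ of the semilinear set need not be the displacement of any single cycle, and even when suitable cycles exist they must stay enabled when iterated together and after the base run, which requires controlling how many tokens are available throughout. This is exactly the content of Leroux's analysis of the geometry of reachability sets: the flattability result of \cite{leroux-presburger-lics13} guarantees that the reachability witnesses for a semilinear subset of the reachability set can be organised into one flat language. I would therefore invoke that theorem to obtain a flat expression for the linear case rather than reconstruct its proof, and transfer it to Petri nets through the standard displacement-preserving translation between Petri nets and VASS, which preserves the reachability relation. Combined with the union reduction above, this yields the proposition, which is in fact the Petri-net reformulation of \cite[Theorem 1]{jancar-cofinite-pn18}.
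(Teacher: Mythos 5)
Your proposal is correct and, in substance, matches the paper's treatment: the paper gives no self-contained proof of this proposition, importing it directly from \cite[Theorem 1]{jancar-cofinite-pn18} (itself a reformulation of a lemma of \cite{leroux-presburger-lics13} for VASS) together with the standard reachability-preserving translation between Petri nets and VASS, which is exactly where you also place the deep content (the realisation of period generators by iterable cycles). Your preliminary reduction from a semilinear $S$ to a single linear set --- starring the fixed-word factors and concatenating the flat expressions $FE_1 \cdots FE_\ell$ --- is valid, since $\Reach{M,L}$ is monotone in $L$, but it is unnecessary because the cited theorem already covers semilinear sets.
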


Following the technique used in \cite{jancar-cofinite-pn18}, this proposition provides us a tool to solve our inclusion problem. We use two semi-procedures, one searches for a $M' \in \set{M \in \nat^P \mid M(p)=0 \mbox{ for all } p \in P\setminus\set{p_f} \mbox{ and } M(p_f)\geq B} $ but not  in $\Reach{M_0}$ and the other one searches  a flat expression of transitions  $FE$ such that $\set{M \in \nat^P \mid M(p)=0 \mbox{ for all } p \in P\setminus\set{p_f}\mbox{ and } M(p_f)\geq B} \subseteq \Reach{M_0,L(FE)}$.

\begin{proposition}
  \label{prop:fixbound}
For a Petri net $\Net=\tuple{P,T,Pre,Post}$, a marking $M_0\in \nat^P$, a place $p_F \in P$ and a bound $B \in \nat$, testing whether $\set{M \in \nat^P \mid M(p)=0 \mbox{ for all } p \in P\setminus\set{p_f} \mbox{ and } M(p_f)\geq B} \subseteq \Reach{M_0}$ is decidable.
\end{proposition}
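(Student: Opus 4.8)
The plan is to decide the inclusion by dovetailing two complementary semi-procedures, arranged so that exactly one of them is guaranteed to halt. Write $S = \set{M \in \nat^P \mid M(p)=0 \text{ for all } p \in P\setminus\set{p_f} \text{ and } M(p_f) \geq B}$ for the target set. The first observation is that $S$ is semilinear: it is the single linear set $\vect{b} + \mathbf{P}$, where $\vect{b}$ places $B$ tokens in $p_f$ and none elsewhere and $\mathbf{P}$ is generated by the unit vector on $p_f$; hence $S$ is representable by a computable Presburger formula. The two semi-procedures are a \emph{refutation} procedure, searching for a marking in $S \setminus \Reach{M_0}$, and a \emph{certification} procedure, searching for a flat expression witnessing the inclusion.

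For the refutation procedure, I would enumerate the countably many elements of $S$, namely the markings carrying $k$ tokens in $p_f$ for $k = B, B+1, B+2, \ldots$, and for each such $M'$ decide whether $M' \in \Reach{M_0}$. The latter is an instance of the reachability problem, which is decidable. If the inclusion $S \subseteq \Reach{M_0}$ fails, some $M' \in S$ is unreachable and the enumeration will eventually expose it; if the inclusion holds, the procedure never halts. Thus the refutation procedure halts precisely when the answer is \textbf{no}.

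For the certification procedure, I would enumerate all flat expressions $FE$ over $T$. For each one, Proposition~\ref{prop:flatpetri} lets me compute a Presburger formula defining the semilinear set $\Reach{M_0, L(FE)}$, and since $S$ is also semilinear I can decide the inclusion $S \subseteq \Reach{M_0, L(FE)}$ by testing validity of the Presburger sentence $\forall M\,(M \in S \Rightarrow M \in \Reach{M_0, L(FE)})$. If such an $FE$ is found, then a fortiori $S \subseteq \Reach{M_0}$ and the procedure reports \textbf{yes}. The step I expect to carry the real content is that, whenever the inclusion genuinely holds, Proposition~\ref{prop:semilinear:flat} guarantees that \emph{some} flat expression already satisfies $S \subseteq \Reach{M_0, L(FE)}$; hence the certification procedure halts exactly when the answer is \textbf{yes}.

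To finish, I would observe that since the inclusion either holds or fails, exactly one of the two semi-procedures halts, and when it does it returns the correct answer, so the combined algorithm is a genuine decision procedure. The main obstacle is not the mechanics of the enumeration but the reliance on Proposition~\ref{prop:semilinear:flat}: it is this flattability result that turns the a priori only semi-decidable "yes" side into a terminating search, by ensuring that a semilinear subset of a reachability set is always captured by a computable flat under-approximation. The decidability of Presburger arithmetic is what makes each individual inclusion test in the certification procedure effective.
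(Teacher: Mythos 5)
Your proposal is correct and follows essentially the same argument as the paper: a dovetailed pair of semi-procedures, with non-inclusion refuted by enumerating markings of $S$ and testing Petri net reachability, and inclusion certified by enumerating flat expressions and combining Proposition~\ref{prop:semilinear:flat} (flattability of semilinear subsets of the reachability set) with Proposition~\ref{prop:flatpetri} and decidability of semilinear-set inclusion. The only difference is presentational: you make explicit the semilinear representation of $S$ and the role of Presburger validity checking, which the paper leaves implicit.
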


\begin{proof}
  The two semi-procedures to decide the inclusion are the following ones:
  \begin{enumerate}
  \item If $\set{M \in \nat^P \mid M(p)=0 \mbox{ for all } p \in P\setminus\set{p_f} \mbox{ and } M(p_f)\geq B} \not\subseteq \Reach{M_0}$ then there exists $b'\geq B$ such that $M' \notin \Reach{M_0}$ and $M'(p_f)=b'$ and $M'(p)=0 \mbox{ for all } p \in P\setminus\set{p_f}$. Hence a semi-procedure for non-inclusion enumerates such $b'$ and check  for non-reachability of  the marking $M'$.
   \item  If $\set{M \in \nat^P \mid M(p)=0 \mbox{ for all } p \in P\setminus\set{p_f} \mbox{ and } M(p_f)\geq B} \subseteq \Reach{M_0}$, then, from Proposition \ref{prop:semilinear:flat}, there exists a a flat expression $FE$ of transitions such that  $\set{M \in \nat^P \mid M(p)=0 \mbox{ for all } p \in P\setminus\set{p_f}\mbox{ and } M(p_f)\geq B} \subseteq \Reach{M_0,L(FE)}$ because the set  $\set{M \in \nat^P \mid M(p)=0 \mbox{ for all } p \in P\setminus\set{p_f}\mbox{ and } M(p_f)\geq B}$ is clearly semilinear (it can be described easily by a Presburger formula). Hence the semi-procedure for inclusion enumerates the flat expressions of transitions $FE$, computes the semilinear set $\Reach{M_0,L(FE)}$ thanks to Proposition \ref{prop:flatpetri} and tests whether $\set{M \in \nat^P \mid M(p)=0 \mbox{ for all } p \in P\setminus\set{p_f}\mbox{ and } M(p_f)\geq B} \subseteq \Reach{M_0,L(FE)}$ which amounts to test the inclusion of two semilinear sets which is decidable.
  \end{enumerate}
\end{proof}

\subsection{Finding the bound}

We now show why the single-place C.O.P. is decidable. Let $\Net=\tuple{P,T,Pre,Post}$  be a Petri net with a marking $M_0\in \nat^P$ and a place $p_f \in P$. One key aspect is that the set of markings reachable from $M_0$ with no token in the other places except $p_f$ is semilinear. This is a consequence of the following proposition.

\begin{proposition}\cite[Lemma IX.1]{leroux-presburger-lics13}
  \label{prop:reachperiodic}
Let $S \subseteq \nat^P$ be a semilinear set of markings. Then the set $\Reach{M_0}\cap S$ is a finite union of sets $\vect{b}+\vect{P}$ where $\vect{b} \in \nat^P$ and $\vect{P} \subseteq \nat^P$ is periodic.
\end{proposition}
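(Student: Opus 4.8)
The plan is to treat this as an instance of Leroux's structural theorem for Petri nets (equivalently VASS) and to obtain the decomposition in two stages: first describe the shape of the bare reachability set $\Reach{M_0}$, and then close that shape under intersection with the semilinear set $S$. Since the periodic sets here are \emph{not} required to be finitely generated, the target class is exactly the \emph{almost semilinear} sets in Leroux's sense, namely finite unions of sets $\vect{b}+\vect{P}$ with $\vect{b}\in\nat^P$ and $\vect{P}\subseteq\nat^P$ periodic (that is, $\vect{0}\in\vect{P}$ and $\vect{P}+\vect{P}\subseteq\vect{P}$, as in the definition recalled above).

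First I would invoke the deep fact, due to Leroux, that $\Reach{M_0}$ is itself almost semilinear. The engine behind this decomposition combines two ingredients. Finiteness of the union rests on a well-quasi-ordering argument (Dickson's lemma on $\nat^P$): the witnessing runs cannot give rise to infinitely many pairwise-incomparable ``skeletons''. Periodicity of each $\vect{P}$ comes from a pumping argument: the effect vectors of non-negative cycles that can be iterated freely, once a run reaches a configuration dominating the tokens the cycle consumes, form a set containing $\vect{0}$ and closed under addition, hence periodic, and one shows $\vect{b}+\vect{P}\subseteq\Reach{M_0}$ for a suitable base marking $\vect{b}$. The rigorous justification of this decomposition is Leroux's theory of Presburger inductive invariants and smooth periodic sets; this is the genuinely non-elementary core, and it is the step I expect to be the main obstacle.

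It then remains to intersect with $S$. Every semilinear set is almost semilinear, since a linear set $\vect{c}+\langle\vect{q}_1,\dots,\vect{q}_m\rangle$ is of the form $\vect{c}+\vect{P}$ with $\vect{P}$ the (finitely generated, hence periodic) set spanned by the $\vect{q}_i$. The almost-semilinear sets are closed under intersection with semilinear sets: intersecting a single stratum $\vect{b}+\vect{P}$ with a linear set reduces, after solving the finitely many linear membership constraints, to a finite union of translated periodic sets, using that periodic sets are themselves closed under intersection (if $x,y\in\vect{P}\cap\vect{Q}$ then $x+y\in\vect{P}\cap\vect{Q}$ and $\vect{0}\in\vect{P}\cap\vect{Q}$). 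Applying this closure to $\Reach{M_0}$ from the first stage and to the semilinear $S$ shows that $\Reach{M_0}\cap S$ is almost semilinear, which is precisely the claimed finite union of sets $\vect{b}+\vect{P}$ with $\vect{P}$ periodic. I expect the intersection bookkeeping to be routine, so the whole weight of the statement rests on the imported almost-semilinearity of $\Reach{M_0}$.
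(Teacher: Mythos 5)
The paper offers no proof of this proposition at all: it is imported verbatim as Leroux's Lemma IX.1, so what you are attempting is a rederivation of the cited black box. The rederivation has a genuine gap, and it sits exactly where you expect it not to: the ``routine'' second stage. The class of finite unions of sets $\vect{b}+\vect{P}$ with $\vect{P}$ an \emph{arbitrary} periodic set is \emph{not} closed under intersection with semilinear (even linear) sets. Concretely, let $A=\set{2^n \mid n\in\nat}$ and let $\vect{P}\subseteq\nat^2$ be the periodic closure of $\set{(a,1)\mid a\in A}$, i.e.\ $\vect{P}=\set{(0,0)}\cup\set{(a_1+\cdots+a_k,\,k)\mid k\geq 1,\ a_i\in A}$; this contains $\vect{0}$ and is closed under addition, hence periodic. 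Intersecting with the linear set $S=(0,1)+\nat\cdot(1,0)$ gives $\vect{P}\cap S=\set{(2^n,1)\mid n\in\nat}$. Now any \emph{infinite} finite union of sets $\vect{b}_i+\vect{P}_i$ with $\vect{P}_i$ periodic must contain an infinite arithmetic progression: some $\vect{P}_i$ is infinite, hence contains a nonzero $\vect{p}$ and therefore all of $\nat\vect{p}$, so $\vect{b}_i+\nat\vect{p}$ lies in the union. But $\set{(2^n,1)\mid n\in\nat}$ contains no infinite arithmetic progression, since gaps between consecutive powers of $2$ are unbounded. So $\vect{P}\cap S$ is not of the required shape. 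The slip in your justification is that closure of periodic sets under intersection only helps when the two base points coincide, i.e.\ $(\vect{b}+\vect{P})\cap(\vect{b}+\vect{Q})=\vect{b}+(\vect{P}\cap\vect{Q})$; with distinct base points --- exactly the situation engineered above --- it says nothing, and no finite ``membership-constraint'' bookkeeping can repair this, because the conclusion is simply false at this level of generality.

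Consequently your division of labor --- put all the weight on ``$\Reach{M_0}$ is a finite union of cosets of periodic sets,'' then intersect for free --- cannot work: that bare decomposition is strictly weaker than what Leroux proves, and, by the counterexample, it does not imply the proposition. Leroux's theorem provides periodic components with additional structure (in his terminology they are asymptotically definable/smooth), and the cited Lemma IX.1 is precisely the statement that the decomposition survives intersection with a Presburger set; it is obtained by applying his machinery to the intersection, not by a generic closure property of arbitrary periodic cosets. So either you import Lemma IX.1 itself --- which is exactly what the paper does, reducing your argument to the citation --- or you must carry the smoothness/asymptotic-definability of the periodic pieces through the intersection step explicitly. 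As written, the second stage fails.
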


\noindent From this proposition and Lemma \ref{lem:onedim:periodic}, we can deduce the following result.

\begin{proposition}
  \label{prop:intesect:semilinear}
 $\Reach{M_0} \cap \set{M \in \nat^P \mid M(p)=0 \mbox{ for all } p \in P\setminus\set{p_f}}$ is semilinear.
\end{proposition}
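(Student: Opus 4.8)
The plan is to apply Proposition \ref{prop:reachperiodic} to the specific semilinear set $S = \set{M \in \nat^P \mid M(p)=0 \mbox{ for all } p \in P\setminus\set{p_f}}$ and then to exploit the fact that this set is essentially one-dimensional in order to invoke Lemma \ref{lem:onedim:periodic}. First I would observe that $S$ is indeed semilinear: it equals $\set{\lambda.p_f \mid \lambda \in \nat}$, a finitely generated periodic set with the single generator $p_f$. Hence Proposition \ref{prop:reachperiodic} applies and yields a decomposition $\Reach{M_0}\cap S = \bigcup_{i=1}^{\ell} (\vect{b}_i + \vect{P}_i)$ with each $\vect{b}_i \in \nat^P$ and each $\vect{P}_i \subseteq \nat^P$ periodic.

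Next I would argue that each summand $\vect{b}_i + \vect{P}_i$ is concentrated on the single coordinate $p_f$. Since every periodic set contains $\mathbf{0}$, we have $\vect{b}_i = \vect{b}_i + \mathbf{0} \in \vect{b}_i + \vect{P}_i \subseteq S$, so $\vect{b}_i(p) = 0$ for all $p \neq p_f$. Moreover, for every $\vect{p} \in \vect{P}_i$ the element $\vect{b}_i + \vect{p}$ lies in $S$, and combining this with $\vect{b}_i \in S$ forces $\vect{p}(p) = 0$ for all $p \neq p_f$. Hence each $\vect{P}_i$ lives entirely on the $p_f$-axis.

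The core step is then the reduction to the one-dimensional case. I would consider the projection $\pi(\vect{P}_i) \subseteq \nat$ onto the $p_f$ coordinate. Projection sends $\mathbf{0}$ to $0$ and respects addition, so $\pi(\vect{P}_i)$ is a periodic subset of $\nat$; by Lemma \ref{lem:onedim:periodic} it is semilinear. Since, by the previous paragraph, $\vect{P}_i$ is exactly the image of $\pi(\vect{P}_i)$ under the coordinate embedding placing a value on $p_f$ and $0$ elsewhere, and this embedding is a monoid homomorphism that preserves semilinearity, $\vect{P}_i$ is itself semilinear in $\nat^P$. Consequently each $\vect{b}_i + \vect{P}_i$ is a translate of a semilinear set, hence semilinear, and a finite union of semilinear sets is semilinear, which proves the claim.

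I expect the main obstacle to be the bridge between the multidimensional periodicity delivered by Proposition \ref{prop:reachperiodic} and the one-dimensional Lemma \ref{lem:onedim:periodic}: one must verify carefully that intersecting with $S$ collapses both the base vectors $\vect{b}_i$ and the periods $\vect{P}_i$ onto the single coordinate $p_f$, so that the genuinely multidimensional (and a priori only periodic, not finitely generated) sets $\vect{P}_i$ can be handled by the scalar result before re-embedding.
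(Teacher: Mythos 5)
Your proof is correct and follows essentially the same route as the paper: apply Proposition \ref{prop:reachperiodic} to the semilinear set $\set{M \in \nat^P \mid M(p)=0 \mbox{ for all } p \in P\setminus\set{p_f}}$, observe that the resulting base vectors $\vect{b}_i$ and periodic sets $\vect{P}_i$ are supported only on the coordinate $p_f$, and invoke Lemma \ref{lem:onedim:periodic} to conclude each $\vect{P}_i$ (hence each $\vect{b}_i+\vect{P}_i$, hence the finite union) is semilinear. Your write-up is in fact more careful than the paper's, spelling out why $\mathbf{0}\in\vect{P}_i$ forces $\vect{b}_i$ into the intersected set and making the projection/re-embedding step explicit.
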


\begin{proof}
  From Proposition \ref{prop:reachperiodic}, we know that the set $\Reach{M_0} \cap \set{M \in \nat^P \mid M(p)=0 \mbox{ for all } p \in P\setminus\set{p_f}}$ is equal to $\bigcup_{1 \leq i \leq \ell} \vect{b}_i+\vect{P}_i$ where $\vect{b}_i \in \nat^P$ and $\vect{P}_i \subseteq \nat^P$ is periodic for each $i \in [1,\ell]$. Now note that by definition for each $p \in P \setminus \set{p_f}$, we have $\vect{b}_i(p)=0$ and for each element $\vect{v} \in \vect{P}_i$, we have $\vect{v}(p)=0$ for all $i \in [1,\ell]$. It means that the only relevant data in this union of sets is the projection over the place $p_f$. 
  From Lemma \ref{lem:onedim:periodic}, we hence have that each $\vect{P}_i$ is a semilinear set and as a direct consequence each $\vect{b}_i + \vect{P}_i$ is as well semilinear. \end{proof}

Another key point for the decidability of the single-place C.O.P. is the ability to test whether the intersection of the reachability set of a Petri net with a linear set is empty. In fact, it reduces to the reachability problem.

\begin{lemma}
  \label{lem:decid:intersect}
  If $S \subseteq \nat^P$ is a linear set of the form $\vect{b} + \vect{P}$ where $\vect{P}$ is finitely generated, then testing whether $\Reach{M_0} \cap S = \emptyset$ is decidable.
\end{lemma}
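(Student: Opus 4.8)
The plan is to reduce this emptiness question to a single instance of the reachability problem, which is decidable by the cited results. Since $\vect{P}$ is finitely generated, I fix generators $\vect{p}_1,\ldots,\vect{p}_k \in \nat^P$ with $\vect{P} = \set{\lambda_1\vect{p}_1 + \cdots + \lambda_k\vect{p}_k \mid \lambda_1,\ldots,\lambda_k \in \nat}$. Then $\Reach{M_0} \cap S \neq \emptyset$ holds iff there exist $M' \in \Reach{M_0}$ and $\lambda_1,\ldots,\lambda_k \in \nat$ with $M' = \vect{b} + \sum_{i=1}^k \lambda_i\vect{p}_i$; equivalently, $M' \geq \vect{b}$ and $M'-\vect{b}$ is a nonnegative integer combination of the generators. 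The idea is to let an augmented net first simulate $\Net$ to reach such an $M'$, then deterministically subtract $\vect{b}$, and finally subtract generators until every place of $P$ is empty.

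Concretely, I would build a Petri net $\Net' = \tuple{P', T', Pre', Post'}$ with places $P' = P \cup \set{r,s,d}$, where $r,s,d$ are three fresh control places. Every transition $t \in T$ is kept but made to test the token in $r$, i.e. $Pre'(t) = Pre(t) + r$ and $Post'(t) = Post(t) + r$, so that transitions of $\Net$ can fire only during the first phase. A switch transition $t_{sw}$ with $Pre'(t_{sw}) = r + \vect{b}$ and $Post'(t_{sw}) = s$ ends that phase, firing once and removing $\vect{b}$. For each $i \in [1,k]$, a transition $t_i$ with $Pre'(t_i) = s + \vect{p}_i$ and $Post'(t_i) = s$ subtracts $\vect{p}_i$ and may fire arbitrarily often. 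Finally $t_{end}$ with $Pre'(t_{end}) = s$ and $Post'(t_{end}) = d$ closes the run. I take $M_0' = M_0 + r$ as initial marking and, as target, the marking $M_d$ with $M_d(d) = 1$ and $M_d(p) = 0$ for every other $p \in P'$. I then claim $M_d \in \Reach{M_0'}$ in $\Net'$ iff $\Reach{M_0} \cap S \neq \emptyset$ in $\Net$, which gives the desired decidability.

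For the forward direction, given $M' \in \Reach{M_0}$ with $M' = \vect{b} + \sum_i \lambda_i\vect{p}_i$, I would simulate the run to $M'$ (the token in $r$ is preserved along the way), fire $t_{sw}$ (enabled since $M' \geq \vect{b}$), then fire each $t_i$ exactly $\lambda_i$ times in any order. Because the generators are nonnegative and their total is precisely $M' - \vect{b}$, every prefix of the subtraction stays below $M' - \vect{b}$, so all intermediate markings are nonnegative and the $P$-component reaches $\vect{0}$; firing $t_{end}$ then yields $M_d$.

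The converse direction is where the control structure must be checked, and this is the main obstacle. Here one argues that exactly one control token is present throughout and that it moves through $r \to s \to d$: no transition produces $r$ from scratch, so once $t_{sw}$ has fired $r$ is gone forever, forcing transitions of $\Net$ (and $t_{sw}$) to fire only while the token sits in $r$, and only the $t_i$ to fire while it sits in $s$. Reading off the $P$-marking $M'$ at the moment $t_{sw}$ fires gives $M' \in \Reach{M_0}$, and the requirement that all $P$-places be empty in $M_d$ forces $M' - \vect{b} = \sum_i \lambda_i\vect{p}_i$ with $\lambda_i$ the number of firings of $t_i$, hence $M' \in S$. Making this bookkeeping precise is routine once the gadget is set up; the genuinely hard ingredient, decidability of Petri net reachability, is invoked as a black box.
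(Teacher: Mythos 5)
Your construction is correct and is essentially the paper's own proof: the paper builds the same two-phase gadget (its places $p_{sim}$, $p_{lin}$ and transitions $t_{lin}$, $t_{cons_i}$, $t_{end}$ play exactly the roles of your $r$, $s$, $t_{sw}$, $t_i$, $t_{end}$, with the simulated transitions testing the control token and the generator transitions testing $p_{lin}$ without consuming it), and it likewise concludes by one call to Petri net reachability. The only difference is cosmetic: instead of adding a final place $d$, the paper's $t_{end}$ produces nothing, so its target is the zero marking $\vect{0}$ rather than your $M_d$.
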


\begin{proof}
  We assume $\mathbf{P}=\set{\lambda_1.\vect{v}_1+\ldots+\lambda_k.\vect{v_k} \mid \lambda_i \in \nat \mbox{ for all } i \in [1,k]}$.
  From $\Net=\tuple{P,T,Pre,Post}$, we build another Petri net $\Net'=\tuple{P',T',Pre',Post'}$ such that:
  \begin{itemize}
  \item $P'=P \cup \set{p_{sim},p_{lin}}$,
  \item $T'=T \cup \set{t_{lin},t_{cons_1},\ldots,t_{cons_k},t_{end}}$.
  \end{itemize}

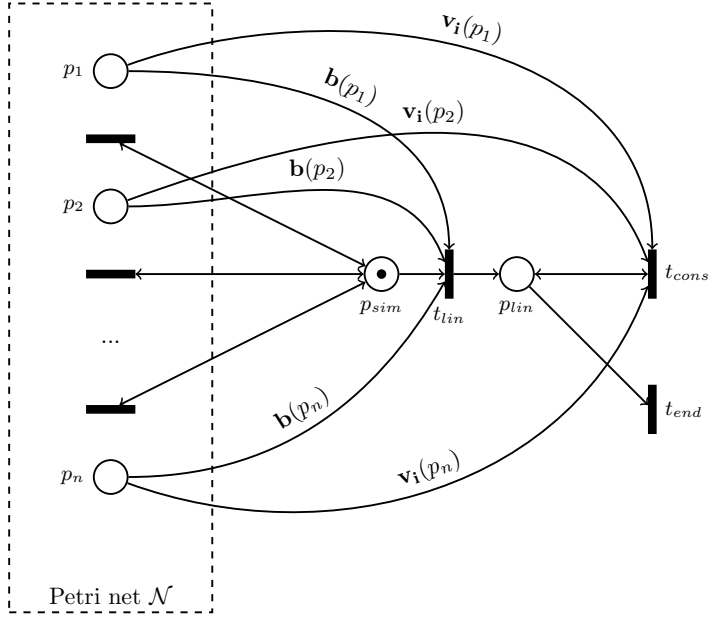
\begin{figure}[htbp]
\begin{center}
\scalebox{0.9}{
\begin{tikzpicture}[node distance=1cm]
\tikzstyle{every state}=[inner sep=3pt,minimum size=20pt]

\node(p1)[petri-p,draw=black,label=180:{\small $p_1$}]{};
\node(t1)[petri-t,draw=black,fill=black,below of=p1,label=180:{\small}]{};
\node(p2)[petri-p,below of =t1,draw=black,label=180:{\small $p_2$}]{};
\node(t2)[petri-t,draw=black,fill=black,below of=p2,label=180:{\small}]{};
\node(d)[below of=t2]{...};
\node(t3)[petri-t,draw=black,fill=black,below of=d,label=180:{\small}]{};
\node(p3)[petri-p,draw=black,below of=t3,label=180:{\small $p_n$}]{};
\node(r)[rectangle,thick,inner sep=0pt, minimum width=3cm,minimum height=9cm,draw=black,dashed,left of=p1,xshift=1cm,yshift=-3.5cm]{};
\node(l)[below of=p3,yshift=-0.75cm]{Petri net $\Net$};

\node(psim)[petri-p,draw=black,right of=t2,xshift=3cm,label=-90:{\small $p_{sim}$}]{}
edge[<->,thick] (t1)
edge[<->,thick] (t2)
edge[<->,thick] (t3);
\node(tlin)[petri-t2,draw=black,fill=black,right of=psim,xshift=0cm,label=-90:{\small
  $t_{lin}$}]{}
edge[<-,thick] (psim)
edge[<-,thick,out=90,in=0] node [sloped,above]{$\vect{b}(p_1)$}(p1)
edge[<-,thick,out=110,in=0] node [sloped,above]{$\vect{b}(p_2)$} (p2)
edge[<-,thick,out=-120,in=0] node [sloped,above]{$\vect{b}(p_n)$}  (p3);
\node(plin)[petri-p,draw=black,right of=tlin,label=-90:{\small $p_{lin}$}]{}
edge[<-,thick] (tlin);
\node(tcons)[petri-t2,draw=black,fill=black,right of=plin,xshift=1cm,label=0:{\small
  $t_{cons_i}$}]{}
edge[<->,thick] (plin)
edge[<-,thick,out=90,in=20] node [sloped,above]{$\vect{v_i}(p_1)$}(p1)
edge[<-,thick,out=110,in=20] node [sloped,above]{$\vect{v_i}(p_2)$} (p2)
edge[<-,thick,out=-110,in=-20] node [sloped,above]{$\vect{v_i}(p_n)$}  (p3);
\node(tend)[petri-t2,draw=black,fill=black,yshift=-1cm,below of=tcons,label=0:{\small
  $t_{end}$}]{}
edge[<-,thick] (plin);

\node[petri-tok] at (psim) {};

\end{tikzpicture}
}
\end{center}
\caption{The net $\Net'$ on a drawing (where only $t_{cons_i}$ is represented)}
\label{fig:intersect}
\end{figure}
  
  Intuitively, while there is a token in place $p_{sim}$ then $\Net'$ simulates $\Net$ (and let the token in $p_{sim}$). Then at some point $\Net'$ fires $t_{lin}$ which consumes the token in $p_{sim}$, consumes $\vect{b}[p]$ token in each place $p \in P$ and produces a token in $p_{lin}$. Then each transition $t_{cons_i}$, while there is a token in $p_{lin}$ (it tests the presence but does not consume it) consumes $\vect{p}_i(p)$ token in each place $p \in P$. Finally, the transition $t_{end}$ consumes the token in $t_{end}$ and does not produce any token.
  \begin{itemize}
  \item for all $t \in T$, we have $Pre'(t)(p_{sim})=1$, $Pre'(t)(p_{lin})=0$ and  $Pre'(t)(p)=Pre(t)(p)$ for all $p \in P$,
  \item for all $t \in T$, we have $Post'(t)(p_{sim})=1$, $Post'(t)(p_{lin})=0$ and  $Post'(t)(p)=Post(t)(p)$ for all $p \in P$,
   \item For what concerns the transition $t_{lin}$:
    \begin{itemize}
  \item $Pre'(t_{lin})(p_{sim})=1$, $Pre'(t_{lin})(p_{lin})=0$ and $Pre'(t_{lin})(p)=\vect{b}(p)$ for all $p \in P$,
  \item $Post'(t_{lin})(p_{sim})=0$, $Post'(t_{lin})(p_{lin})=1$ and $Post'(t_{lin})(p)=0$ for all $p \in P$,
    \end{itemize}  
  \item For what concerns the transitions $t_{cons_i}$ for $i \in [1,k]$:
    \begin{itemize}
  \item $Pre'(t_{cons_i})(p_{sim})=0$, $Pre'(t_{cons_i})(p_{lin})=1$ and $Pre'(t_{cons_i})(p)=\vect{p}_i(p)$ for all $p \in P$,
  \item $Post'(t_{cons_i})(p_{sim})=0$, $Post'(t_{cons_i})(p_{lin})=1$ and $Post'(t_{cons_i})(p)=0$ for all $p \in P$,
    \end{itemize}
  \item For what concerns the transition $t_{end}$ for $i \in [1,k]$:
    \begin{itemize}
     \item $Pre'(t_{end})(p_{sim})=0$, $Pre'(t_{end})(p_{lin})=1$ and $Pre'(t_{end})(p)=0$ for all $p \in P$,
     \item $Post'(t_{end})(p_{sim})=0$, $Post'(t_{end})(p_{lin})=0$ and $Post'(t_{cons_i})(p)=0$ for all $p \in P$.
       \end{itemize}
  \end{itemize}

  If we consider the marking $M'_0$ such that $M'_0(p_{sim})=1$, $M'_0(p_{lin})=0$ and $M'_0(p)=M_0(p)$ for all $p \in P$, then one can easily check that $\Reach{M_0} \cap S = \emptyset$ in $\Net$ iff $\vect{0} \notin \Reach{M'_0}$ in $\Net'$. In fact, the Petri net $\Net'$ first guesses non deterministically a marking of $\Reach{M_0}$ in the simulation phase, then it checks thanks to the transitions $t_{lin},t_{cons_1},\ldots,t_{cons_k}$ that this marking belongs to $S$ and finally it ends the test with $t_{end}$ which takes the token in $p_{lin}$.  The decidability of the reachability problem in Petri nets allows us to conclude.
\end{proof}

The previous results allow us to design two semi-procedures to decide the single place C.O.P. The first one enumerates the $B \in \nat$ and uses the result of Proposition \ref{prop:fixbound} to check if one is a cut-off. The other  one uses the fact that if there does not exist a cut-off then the set $\set{M \notin \Reach{M_0} \mid M(p)=0 \mbox{ for all } p \in P\setminus\set{p_f}}$ is semi-linear (by Proposition \ref{prop:intesect:semilinear})  and infinite and it includes a semi-linear set of the form $\set{\vect{b} +\lambda.\vect{p} \mid \lambda \in \nat}$ with $\vect{b},\vect{p} \in \nat^P$ and $\vect{0}<\vect{p}$. In this latter case we have $\Reach{M_0} \cap \set{\vect{b} +\lambda.\vect{p} \mid \lambda \in \nat} =\emptyset$ and we use  the result of Lemma \ref{lem:decid:intersect} to enumerate the $\vect{b},\vect{p}$ and find a pair satisfying this property.

\begin{theorem}
\label{thm:single:decid}
  The single place C.O.P. is decidable.  
\end{theorem}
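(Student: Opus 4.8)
The plan is to decide the single place C.O.P.\ by running two complementary semi-procedures in parallel: one that halts exactly when a cut-off exists, and one that halts exactly when it does not. Since these two cases are mutually exclusive and exhaustive, dovetailing the two procedures yields a terminating decision procedure. All the machinery needed is already in place: Proposition~\ref{prop:fixbound} lets us test whether a \emph{given} bound is a cut-off, Proposition~\ref{prop:intesect:semilinear} guarantees that the reachable single-place markings form a semilinear set, and Lemma~\ref{lem:decid:intersect} makes emptiness of the intersection of $\Reach{M_0}$ with a linear set decidable.

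The first semi-procedure enumerates the candidate bounds $B = 0, 1, 2, \ldots$ and, for each, invokes Proposition~\ref{prop:fixbound} to test whether $\set{M \in \nat^P \mid M(p)=0 \mbox{ for all } p \in P\setminus\set{p_f} \mbox{ and } M(p_f)\geq B} \subseteq \Reach{M_0}$. If a cut-off exists, the least such $B$ is eventually reached and the test succeeds, so this procedure halts precisely in the positive case.

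The second semi-procedure witnesses the absence of a cut-off. The key observation is that, by Proposition~\ref{prop:intesect:semilinear}, the set $\Reach{M_0} \cap \set{M \in \nat^P \mid M(p)=0 \mbox{ for all } p \in P\setminus\set{p_f}}$ is semilinear; being supported only on $p_f$, it is determined by the semilinear set of values $\set{n \in \nat \mid M^{(n)} \in \Reach{M_0}}$. If no cut-off exists, this set of values omits infinitely many $n$, and as a semilinear---hence eventually periodic---subset of $\nat$, its complement must then contain an entire infinite arithmetic progression. Concretely, there exist $\vect{b},\vect{p} \in \nat^P$, both supported on $p_f$ and with $\vect{0} < \vect{p}$, such that $\Reach{M_0} \cap \set{\vect{b} + \lambda.\vect{p} \mid \lambda \in \nat} = \emptyset$. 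The second semi-procedure therefore enumerates all such pairs $(\vect{b},\vect{p})$ and, using Lemma~\ref{lem:decid:intersect}, tests emptiness of the corresponding intersection; finding one empty certifies that infinitely many single-place markings $M^{(n)}$ are unreachable, so no cut-off exists.

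The main obstacle is establishing \emph{completeness} of the second semi-procedure: that failure of the cut-off property really does produce a linear progression of unreachable single-place markings detectable by Lemma~\ref{lem:decid:intersect}. This hinges entirely on the semilinearity supplied by Proposition~\ref{prop:intesect:semilinear}: once the reachable single-place markings are known to form a semilinear, thus eventually periodic, subset of $\nat$, the presence of infinitely many gaps forces a full infinite progression of gaps, which is exactly what the emptiness test of Lemma~\ref{lem:decid:intersect} recognises. Soundness of both procedures is routine, and their exhaustiveness ensures that the parallel execution always terminates with the correct answer.
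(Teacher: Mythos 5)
Your proof is correct and takes essentially the same approach as the paper: the same pair of dovetailed semi-procedures, with the positive case handled by enumerating bounds and testing each via Proposition~\ref{prop:fixbound}, and the negative case handled by using the semilinearity from Proposition~\ref{prop:intesect:semilinear} to extract an infinite arithmetic progression $\set{\vect{b}+\lambda.\vect{p} \mid \lambda \in \nat}$ of unreachable single-place markings, whose disjointness from $\Reach{M_0}$ is tested by Lemma~\ref{lem:decid:intersect}. Your justification of completeness via eventual periodicity of one-dimensional semilinear sets matches the paper's argument that the infinite semilinear complement $\widetilde{F}$ must contain such a progression.
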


\begin{proof}
  We consider a Petri net $\tuple{P,T,Pre,Post}$, an initial marking $M_0$ and a place $p_f$. We solve the single-place C.O.P. with the two following semi-procedures:
  \begin{enumerate}
    \item If there exists $B \in \nat$ such that for
      all $n \geq B$, we have $M^{(n)} \in \Reach{M_0}$ where $M^{(n)}$ is a marking verifying $M^{(n)}(p_f)=n$ and $M^{(n)}(p)=0$ for all $p \in P\setminus\set{p_f}$, the first semi-procedure enumerates the $b'$ of $\nat$ and tests whether for all $n \geq b'$, we have $M^{(n)} \in \Reach{M_0}$. According to Proposition \ref{prop:fixbound}, this test is possible and hence eventually the procedure finds $B$.
    \item Assume there does not exist $B \in \nat$ such that for
      all $n \geq B$, we have $M^{(n)} \in \Reach{M_0}$ where $M^{(n)}$ is a marking verifying $M^{(n)}(p_f)=n$ and $M^{(n)}(p)=0$ for all $p \in P\setminus\set{p_f}$. Let $F=\Reach{M_0} \cap \set{M \in \nat^P \mid M(p)=0 \mbox{ for all } p \in P\setminus\set{p_f}}$. By Proposition \ref{prop:intesect:semilinear}, this set is semilinear. Hence $\widetilde{F}= \set{M \in \nat^P \mid M(p)=0 \mbox{ for all } p \in P\setminus\set{p_f}}\setminus F$ is as well semilinear. Furthermore, the hypothesis holds iff $\widetilde{F}$ is infinite. As a consequence, there exists $\vect{b} \in \nat^P$ and a period $\vect{p} \in \nat^P$ such that $\vect{p} > \vect{0}$ and $\set{\vect{b} +\lambda.\vect{p} \mid \lambda \in \nat} \subseteq \widetilde{F}$. In that case, we have $\set{\vect{b} +\lambda.\vect{p} \mid \lambda \in \nat} \cap \Reach{M_0}=\emptyset$. Hence the second semi-procedure enumerates such two vectors $\vect{b}$ and $\vect{p}$ in $\nat^P$ until $\set{\vect{b} +\lambda.\vect{p} \mid \lambda \in \nat} \cap \Reach{M_0}=\emptyset$. This test can be performed thanks to Lemma \ref{lem:decid:intersect}.
  \end{enumerate}
\end{proof}

Thanks to Proposition \ref{prop:COPtosingleplace}, we obtain the result which concludes this section.

\begin{corollary}
  \label{cor:cop:decid}
  The C.O.P. is decidable.
\end{corollary}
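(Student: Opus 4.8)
The plan is to obtain the result as an immediate composition of the two main achievements of this section, so that essentially no new argument is needed. First I would invoke Proposition \ref{prop:COPtosingleplace}, which establishes that every instance of the C.O.P. can be effectively transformed into an instance of the single place C.O.P. Concretely, given a rendez-vous protocol $\Prot$, the reduction builds the Petri net $\Net_\Prot$ of Section \ref{sec:topetri} together with the initial marking $M_0$ (one token in $p_{q^L_i}$ and none elsewhere) and the distinguished place $p_{q_f}$; Lemma \ref{lem:reductionCOP} guarantees that $C^{(n)}_i \trans^\ast C^{(n)}_f$ holds in $\Prot$ if and only if the marking putting $n$ tokens in $p_{q_f}$ and none elsewhere is reachable from $M_0$ in $\Net_\Prot$. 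Hence $\Prot$ admits a cut-off exactly when $(\Net_\Prot, M_0, p_{q_f})$ is a positive instance of the single place C.O.P.

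The second step is simply to feed this instance to the decision procedure provided by Theorem \ref{thm:single:decid}. Since that theorem asserts that the single place C.O.P. is decidable, and since the construction of $\Net_\Prot$, $M_0$ and $p_{q_f}$ from $\Prot$ is patently effective (it is a direct syntactic translation of the edges of $\Prot$ into transitions of the net), the composition yields a terminating decision procedure for the C.O.P.: compute the Petri net instance and run the algorithm of Theorem \ref{thm:single:decid}.

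I do not expect any genuine obstacle at this stage, precisely because the substance of the problem has already been discharged upstream. The difficult part is the proof of Theorem \ref{thm:single:decid} itself, which rests on three ingredients: the semilinearity of $\Reach{M_0} \cap \set{M \in \nat^P \mid M(p)=0 \mbox{ for all } p \in P\setminus\set{p_f}}$ (Proposition \ref{prop:intesect:semilinear}), the flat-expression characterisation of inclusion (Proposition \ref{prop:semilinear:flat}) used by the semi-procedure that certifies a candidate bound is a cut-off, and the reduction of emptiness of the intersection with a linear set to Petri net reachability (Lemma \ref{lem:decid:intersect}) used by the semi-procedure that certifies the absence of a cut-off. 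For the corollary, it only remains to observe that these pieces combine cleanly through the reduction, so the argument is a one-line appeal to Proposition \ref{prop:COPtosingleplace} and Theorem \ref{thm:single:decid}.
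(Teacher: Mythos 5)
Your proposal is correct and is exactly the paper's argument: the corollary follows by composing the reduction of Proposition \ref{prop:COPtosingleplace} (C.O.P. to single place C.O.P., via the effective construction of $\Net_\Prot$, $M_0$ and $p_{q_f}$ backed by Lemma \ref{lem:reductionCOP}) with the decidability of the single place C.O.P. established in Theorem \ref{thm:single:decid}. No gap to report.
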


\section{The specific case of symmetric rendez-vous}
\label{sec:symmetric}

Even though  the C.O.P. is decidable, the lower bound is quite bad as mentioned in Theorem \ref{thm:nonelem} and  the decision procedure presented in the proof of Theorem \ref{thm:single:decid} is quite technical. We show here that for a specific family of rendez-vous protocols, solving C.O.P. is easier.

\subsection{Definition and basic properties}

A  rendez-vous protocol $\Prot=\tuple{Q,Q_P,Q_L,\Sigma,q_i,q_f,\linebreak[0]q^L_i,q^L_f,E}$ is \emph{symmetric} if it respects the following property: for all $q,q' \in Q$ and $a \in \Sigma$, we have $(q,!a,q') \in E$ iff $(q,?a,q') \in E$. In this context we denote such transitions by $(q,a,q')$. We  furthermore assume w.l.o.g. that in the underlying graph of $\Prot$ for every states $q$ in $Q_P$ there is a path from $q_i$ to $q$ and a path from $q$ to $q_f$ (otherwise an initial configuration can never reach a configuration with a process in $q$ or from a configuration with a process in $q$ a final configuration can never been reached). We now work under these hypotheses.

In symmetric rendez-vous protocols, it is always possible to bring in any state as many pairs of processes one desires from the initial state $q_i$ and to remove as many pairs of processes (and bring them to the final state $q_f$). To perform such actions, it is enough to move pairs of processes following the same path (as the rendez-vous are symmetric, this is allowed by the semantics of rendez-vous protocols). We now state these properties formally. Let $\Prot=\tuple{Q,Q_P,Q_L,\Sigma,q_i,q_f,\linebreak[0]q^L_i,q^L_f,E}$ be a symmetric rendez-vous protocol.

\begin{lemma}
  \label{lem:pumpevenodd}
  Let $C \in \Confs$  verifying $C^{(|C|-1)}_i \trans^\ast C$. Then:
  \begin{enumerate}
  \item for all $C' \in \Confs$ such that $C(q) \leq C'(q)$ and $(C(q) = C'(q))\mod 2$ for all $q \in Q$, we have $C^{(|C'|-1)}_i \trans^\ast C'$,and,
  \item for all $C' \in \Confs$ such that $|C'|=|C|$ and $C'(q)\leq C(q)$ for all $q \in Q\setminus\set{q_f}$ and $(C(q) = C'(q)) \mod 2$ for all $q \in Q$, we have $C^{(|C'|-1)}_i \trans^\ast C'$.
  \end{enumerate}
\end{lemma}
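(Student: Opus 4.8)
The plan is to isolate a single ``pair-pumping'' observation and then to apply it twice, once to create processes and once to destroy them. The observation is the following: in a symmetric protocol, if $q,q' \in Q_P$ are joined by a path of process edges $q = r_0, r_1, \ldots, r_\ell = q'$ and $D \in \Confs$ is any configuration with $D(q) \geq 2$, then $D \trans^\ast D - 2q + 2q'$. Indeed, for each edge $(r_j,a_j,r_{j+1})$ of the path, symmetry provides both $(r_j,!a_j,r_{j+1})$ and $(r_j,?a_j,r_{j+1})$ in $E$, so a pair of processes both sitting in $r_j$ can perform the rendez-vous $a_j$ with one acting as requester and the other as responder, both landing in $r_{j+1}$; the side condition $C(q_1)+C(q_1')\geq 2$ of $\trans$ holds because $q_1=q_1'=r_j$ and the two processes are there. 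Walking this single pair along the whole path moves it from $q$ to $q'$, while each intermediate state is entered and then left so its count is restored, and the leader (whose edges are never used) does not move; the net effect is exactly $-2q+2q'$.

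For part 1, I would first observe that the hypotheses pin the leader to the same state in $C$ and $C'$: for every leader state the values lie in $\set{0,1}$, and $C(q)\leq C'(q)$ together with $C(q)\equiv C'(q)\pmod 2$ forces equality. Writing $C'(q)-C(q)=2m_q\geq 0$, the total surplus $|C'|-|C|=\sum_q 2m_q$ is even. Starting from $C^{(|C'|-1)}_i$, I replay the given execution $C^{(|C|-1)}_i \trans^\ast C$ verbatim; since $\trans$ is monotone with respect to tokens left idle in $q_i$, the $|C'|-|C|$ surplus processes simply remain in $q_i$ and we reach $\tilde C := C + (|C'|-|C|)\,q_i$. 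It then remains, for each $q\neq q_i$, to relocate $2m_q$ processes from $q_i$ to $q$; using the assumed path from $q_i$ to $q$ and applying the pumping observation $m_q$ times does this, as $q_i$ always still contains the pair about to leave. A short count shows every state then holds $C'(q)$ processes, so $\tilde C \trans^\ast C'$.

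Part 2 is the mirror image. The same leader argument applies, and with $C(q)-C'(q)=2n_q\geq 0$ for $q\neq q_f$ and $|C'|=|C|$ one gets $C'(q_f)-C(q_f)=\sum_{q\neq q_f}2n_q$. Here I would reach $C$ by the given execution and then, for each $q\neq q_f$, pump $n_q$ pairs out of $q$ along the assumed path from $q$ to $q_f$; each such removal only requires two processes present in $q$, which holds until the last pair is moved. Since every such path ends at $q_f$, the sole permanent change per pump is $-2q+2q_f$, and summing over all $q$ transforms $C$ into $C'$.

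The main obstacle, and the only point demanding care, is the pumping observation together with the accounting guaranteeing that two co-located processes are available at every step and that intermediate states are restored afterwards; the remainder is monotonicity of $\trans$ and elementary parity arithmetic. I would also explicitly double-check that, because the pumping rendez-vous use process edges only, the unique leader token never participates, which is exactly what the parity-plus-domination hypotheses on the leader states guarantee.
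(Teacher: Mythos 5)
Your proof is correct and follows essentially the same route as the paper's: replay the given execution with the surplus processes left idle in $q_i$, then use symmetry to walk pairs of co-located processes along paths from $q_i$ into deficient states (Point 1) or from surplus states into $q_f$ (Point 2). Your write-up merely makes explicit what the paper leaves informal, namely the pair-pumping observation, the check of the side condition $C(q_1)+C(q'_1)\geq 2$, and the argument that parity plus domination pins the leader to the same state.
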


\begin{proof}
  To prove Point 1, we consider $C' \in \Confs$ such that $C(q)
  \leq C'(q)$ and $(C(q) = C'(q)) \mod 2$ for all $q \in Q$. And we
  let $n=|C|-1$ and $m=|C'|-1$. First note that $n \leq m$. We want to
  show that $C^{(m)}_i \trans^\ast C'$. To do this we first execute
  from $C^{(m)}_i$ the same set of actions as in the execution $C^{(n)}_i \trans^\ast C$. We reach then a configuration $C''$ having the following properties: $C''(q)=C(q)$ for all $q \in Q \setminus \set{q_i}$ and $C''(q_i)=C(q_i)+m-n$. Then for each $q \in Q$ such that $C(q) < C'(q)$, we can bring pairwise $C'(q)-C(q)$ processes from $q_i$ to $q$ following the path  from $q_i$ to $q$. This is possible because the considered protocol is symmetric. Note that $C'(q)-C(q)$ is necessarily even since $(C(q) = C'(q)) \mod 2$. This leads us to the configuration $C''$. To prove Point 2 we proceed similarly by bringing pairwise processes from a state $q$ to $q_f$. 
\end{proof}

As a consequence, we show that there is a cut-off in $\Prot$ iff a final configuration with an even number and another one with an odd number of processes are reachable  in $\Prot$.

\begin{lemma}
  \label{lem:symevenodd}
   There exists $B \in \nat$ such that $C^{(n)}_i \trans^\ast C^{(n)}_f$ for
  all $n \geq B$ iff there exists an even $n_\even \in \nat$ and an odd $n_\odd \in \nat$ such that $C^{(n_\even)}_i \trans^\ast C^{(n_\even)}_f$ and  $C^{(n_\odd)}_i \trans^\ast C^{(n_\odd)}_f$.
\end{lemma}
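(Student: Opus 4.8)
The plan is to prove the two directions separately, leveraging Lemma~\ref{lem:pumpevenodd} as the workhorse. The forward direction is immediate: if there exists a cut-off $B$ such that $C^{(n)}_i \trans^\ast C^{(n)}_f$ for all $n \geq B$, then it suffices to pick any even number $n_\even \geq B$ and any odd number $n_\odd \geq B$; both satisfy the required reachability by the definition of the cut-off. So the entire content of the lemma lies in the converse direction.

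For the converse, assume we have an even $n_\even$ and an odd $n_\odd$ with $C^{(n_\even)}_i \trans^\ast C^{(n_\even)}_f$ and $C^{(n_\odd)}_i \trans^\ast C^{(n_\odd)}_f$. I would set $B = \max(n_\even, n_\odd)$ and show that every $n \geq B$ admits a final run. The key observation is a parity argument: any $n \geq B$ has the same parity as exactly one of $n_\even$ or $n_\odd$. Suppose $n$ is even; then $n$ and $n_\even$ have the same parity, so $n - n_\even$ is even and nonnegative (since $n \geq B \geq n_\even$). I would then apply Point~1 of Lemma~\ref{lem:pumpevenodd} with $C = C^{(n_\even)}_f$, which is reachable from $C^{(n_\even)}_i$ by hypothesis. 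The target configuration is $C^{(n)}_f$: it satisfies $C^{(n_\even)}_f(q) \leq C^{(n)}_f(q)$ for all $q$ (the final configurations differ only in the count at $q_f$, which is larger for $n$) and the parity condition $(C^{(n_\even)}_f(q) = C^{(n)}_f(q)) \bmod 2$ holds because $n - n_\even$ is even and all other components agree. The lemma then yields $C^{(n)}_i \trans^\ast C^{(n)}_f$. The odd case is symmetric, using $n_\odd$ instead.

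The main subtlety, which I would check carefully, is that the hypothesis of Point~1 of Lemma~\ref{lem:pumpevenodd} requires the source configuration $C$ to satisfy $C^{(|C|-1)}_i \trans^\ast C$, and here $C = C^{(n_\even)}_f$ indeed satisfies $C^{(n_\even)}_i \trans^\ast C^{(n_\even)}_f$ by assumption, so the hypothesis is met. I would also verify that the parity and domination conditions in the lemma are satisfied at the state $q_f$: since $C^{(n_\even)}_f(q_f) = n_\even \leq n = C^{(n)}_f(q_f)$ and the difference $n - n_\even$ is even, both conditions hold. At the leader state $q^L_f$ both configurations have value $1$, and at every other state both are $0$, so all conditions are trivially met there. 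This confirms that the application of the lemma is valid, and no genuine obstacle arises beyond bookkeeping the parities correctly.
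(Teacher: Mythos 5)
Your proof is correct and follows essentially the same route as the paper: the forward direction is dismissed as trivial, and the converse sets $B=\max(n_\even,n_\odd)$ and applies Point~1 of Lemma~\ref{lem:pumpevenodd} with $C=C^{(n_\even)}_f$ (resp.\ $C^{(n_\odd)}_f$) and $C'=C^{(n)}_f$, checking the domination and parity conditions exactly as the paper does. Your explicit verification of the lemma's hypotheses at $q_f$, $q^L_f$ and the remaining states is a slightly more careful write-up of the same argument.
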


\begin{proof}
  First obviously if there exists $B \in \nat$ such that for
  all $n \geq B$, we have $C^{(n)}_i \trans^\ast C^{(n)}_f$ then there exists an even natural $n_\even$ and an odd natural $n_\odd$ such that $C^{(n_\even)}_i \trans^\ast C^{(n_\even)}_f$ and  $C^{(n_\odd)}_i \trans^\ast C^{(n_\odd)}_f$. We are hence interested in showing the other direction. Assume there exists an even natural $n_\even$ and an odd natural $n_\odd$ such that $C^{(n_\even)}_i \trans^\ast C^{(n_\even)}_f$ and  $C^{(n_\odd)}_i \trans^\ast C^{(n_\odd)}_f$. Let $B=\max(n_\even,n_\odd)$ and $n \geq B$.  Suppose $n$ is even. Since $C^{(n_\even)}_i \trans^\ast C^{(n_\even)}_f$ and since $C^{(n)}_f$ is such that $C^{(n)}_f(q)= C^{(n_\even)}_f(q)$ for all $q \in Q \setminus \set{q_f}$ and $C^{(n_\even)}(q_f) \leq C^{n}(q_f)$, using 1. from Lemma \ref{lem:pumpevenodd}, we have $C^{(n)}_i \trans^\ast C^{(n)}_f$. The same technique applies when $n$ is odd.
\end{proof}

\subsection{The even-odd abstraction}
\label{subsec:evenodd}

We now present our tool to decide C.O.P. for a symmetric rendez-vous protocol $\Prot=\tuple{Q,Q_P,Q_L,\Sigma,q_i,q_f,\linebreak[0]q^L_i,q^L_f,E}$. We build an abstraction of the transition system $(\Confs,\trans)$ where we only remember the state of the leader and whether the number of processes in each state is even (denoted by  $\even$) or odd ($\odd$). Let $\widehat{\even}=\odd$ and $\widehat{\widehat{\even}}=\even$. The set of even-odd configurations is $\Confseo=Q_L \times \set{\even,\odd}^{Q_P}$. To an even-odd configuration $(q^L,\gamma) \in \Confseo$, we associate the set of configurations $\Interpb{(q^L,\gamma)} \subseteq \Confs$ such that $\Interpb{(q^L,\gamma)} = \set{C \in \Confs \mid C(q^L)=1 \mbox{ and } C(q)=0 \mod 2 \mbox{ iff } \gamma(q)=\even}$. We now define the even-odd  transition relation $\eotrans \subseteq \Confseo \times E \times E \times \Confseo$. We have $(q_1^L,\gamma_1) \leotrans{e,e'} (q_2^L,\gamma_2)$ iff one the following conditions holds:
\begin{enumerate}
\item $e=(q^L_1,a,q^L_2)$ and $e'=(q_1,a,q_2)$ belongs to $Q_P \times \Rv{\Sigma} \times Q_P$ and if $q_1= q_2$ then $\gamma_2 = \gamma_1$ else $\gamma_2(q_1)=\widehat{\gamma_1(q_1)}$, $\gamma_2(q_2)=\widehat{\gamma_1(q_2)}$ and $\gamma_2(q)=\gamma_1(q)$ for all $q \in Q_P \setminus \set{q_1,q_2}$.
\item $e,e' \in Q_P \times \Rv{\Sigma} \times Q_P$ and $q^L_1=q^L_2$ and $e=(q_1,a,q_2)$ and $e'=(q_3,a,q_4)$ and there exists $\gamma' \in \set{\even,\odd}^{Q_P}$ such that:
    \begin{itemize}
    \item if $q_1=q_2$ then $\gamma'=\gamma_1$ else $\gamma'(q_1)=\widehat{\gamma_1(q_1)}$, $\gamma'(q_2)=\widehat{\gamma_1(q_2)}$ and $\gamma'(q)=\gamma_1(q)$ for all $q \in Q_P \setminus \set{q_1,q_2}$, and,
    \item if $q_3=q_4$ then  $\gamma_2=\gamma'$ else  $\gamma_2(q_3)=\widehat{\gamma'(q_3)}$, $\gamma_2(q_4)=\widehat{\gamma'(q_4)}$ and $\gamma_2(q)=\gamma'(q)$ for all $q \in Q_P \setminus \set{q_3,q_4}$.
      \end{itemize}
    \end{enumerate}
    The relation $\leotrans{e,e'}$ reflects how the parity of the number of processes changes when performing a rendez-vous involving edges $e$ and $e'$. For instance, the first case illustrates a rendez-vous between the leader and a process, hence the parity of the number of states in $q_1$ and in $q_2$ changes except when these two control states are equal. The second case deals with a rendez-vous between two processes and it is cut in two steps to take care of  the cases like for instance $q_1\neq q_2$ and $q_3\neq q_4$ and $q_1 \neq q_4$ and $q_2=q_3$; in fact here the parity of the number of processes in $q_2$ should not change, since the first transition adds one process to $q_2$ and the second one removes one from it. We write $(q_1^L,\gamma_1) \eotrans (q_2^L,\gamma_2)$ iff there exists $e,e' \in E$ such that $(q_1^L,\gamma_1) \leotrans{e,e'} (q_2^L,\gamma_2)$ and $\eotrans^\ast$ denotes the reflexive and transitive closure of $\eotrans$.

    As said earlier, $(\Confseo,\eotrans)$ is an abstraction of $(\Confs,\trans)$. We will prove that this abstraction is enough to solve the C.O.P. For this, we define the following abstract configurations in $\Confseo$:
\begin{itemize}
\item $(q^L_i,\gamma^\even_i)$ and $(q^L_f,\gamma^\even_f)$ are such that $\gamma^\even_i(q)=\gamma^\even_f(q)=\even$ for all $q \in Q_P$;
\item $(q^L_i,\gamma^\odd_i)$ and $(q^L_f,\gamma^\odd_f)$ are such that $\gamma^\odd_i(q)=\gamma^\odd_f(q)=\even$ for all $q \in Q_P\setminus \set{q_i,q_f}$   and $\gamma^\odd_i(q_f)=\gamma^\odd_f(q_i)=\even$ and $\gamma^\odd_i(q_i)=\gamma^\odd_f(q_f)=\odd$.
\end{itemize}
Note that we have then $\set{C^{(n)}_i \mid n \mbox{ is even}} \subseteq \Interp{(q^L_i,\gamma^\even_i)}$ and $\set{C^{(n)}_i \mid n \mbox{ is odd}} \subseteq \Interp{(q^L_i,\gamma^\odd_i)}$ and $\set{C^{(n)}_f \mid n \mbox{ is even}} \subseteq \Interp{(q^L_f,\gamma^\even_f)}$ and $\set{C^{(n)}_f \mid n \mbox{ is odd}} \subseteq \Interp{(q^L_f,\gamma^\odd_f)}$. According to the definitions of the relations $\trans$ and $\eotrans$, we can easily deduce this first result.

    \begin{lemma}[Completeness]
      \label{lem:eocomplete}
Let $n \in \nat$. If $C^{(n)}_i \trans^\ast C^{(n)}_f$ and $n$ is even [resp. $n$ is odd]  then $(q^L_i,\gamma^\even_i) \eotrans^\ast(q^L_f,\gamma^\even_f) $ [resp. $(q^L_f,\gamma^\odd_i) \eotrans^\ast(q^L_f,\gamma^\odd_f)$].
\end{lemma}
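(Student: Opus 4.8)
The plan is to exhibit $(\Confseo,\eotrans)$ as a genuine abstraction of $(\Confs,\trans)$ via the map $\alpha\colon\Confs\to\Confseo$ sending a configuration $C$ to $(q^L,\gamma)$, where $q^L$ is the unique leader state with $C(q^L)=1$ and $\gamma(q)=\even$ iff $C(q)$ is even, for every $q\in Q_P$. By construction $C\in\Interp{\alpha(C)}$, and $\alpha(C)$ is in fact the unique even-odd configuration whose interpretation contains $C$. The whole proof then reduces to a one-step simulation statement together with an induction and a computation of the endpoints.

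First I would establish the one-step simulation: whenever $C\trans C'$, we have $\alpha(C)\eotrans\alpha(C')$. A concrete move uses a requester edge and an answerer edge, and since the leader token is unique these edges are either one leader edge together with one process edge, or two process edges (two leader edges being impossible). In the first situation the leader moves from $q_1^L$ to $q_2^L$ while a process moves along $(q_1,a,q_2)$; the update toggles the parities of $C(q_1)$ and $C(q_2)$ exactly when $q_1\neq q_2$, which is precisely Case~1 of $\eotrans$ (and Case~1 also permits $q_2^L=q_1^L$, covering leader self-loops). In the second situation the leader is untouched and two processes move along $(q_1,a,q_2)$ and $(q_3,a,q_4)$; here $\alpha(C')$ differs from $\alpha(C)$ by toggling, for each $q\in Q_P$, the parity of $\gamma$ an odd number of times iff $q$ occurs an odd number of times among $q_1,q_2,q_3,q_4$.

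The hard part will be checking that the two-stage definition of Case~2, routed through the intermediate $\gamma'$, computes exactly this parity. The delicate cases are those where the four states are not all distinct, for instance $q_2=q_3$: then $q_2$ appears in both a ``leaving'' and an ``entering'' role, so its parity should be unchanged, and indeed the two successive toggles $\gamma'(q_2)=\widehat{\gamma_1(q_2)}$ and $\gamma_2(q_3)=\widehat{\gamma'(q_3)}$ cancel since $\widehat{\widehat{\even}}=\even$. A short exhaustive verification over the coincidence patterns among $q_1,q_2,q_3,q_4$ confirms that the intermediate step realizes precisely the required number of toggles, so that a single $\eotrans$ move matches the concrete transition.

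Next I would lift this to executions by a straightforward induction on the length of $C^{(n)}_i\trans^\ast C^{(n)}_f$: the empty execution gives $\alpha(C)\eotrans^\ast\alpha(C)$, and each additional concrete step composes one more $\eotrans$ move by the one-step simulation, yielding $\alpha(C^{(n)}_i)\eotrans^\ast\alpha(C^{(n)}_f)$. Finally I would identify the endpoints: since $C^{(n)}_i$ puts all process mass on $q_i$ (value $n$) with the leader in $q^L_i$, we get $\alpha(C^{(n)}_i)=(q^L_i,\gamma^\even_i)$ when $n$ is even and $(q^L_i,\gamma^\odd_i)$ when $n$ is odd; symmetrically $\alpha(C^{(n)}_f)=(q^L_f,\gamma^\even_f)$ or $(q^L_f,\gamma^\odd_f)$ according to the parity of $n$. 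Substituting these into the derived abstract execution gives exactly the two claimed statements, everything apart from the Case~2 parity accounting being immediate from the definitions.
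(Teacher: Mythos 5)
Your proof is correct and is exactly the elaboration of the paper's argument: the paper gives no explicit proof, remarking only that the lemma follows ``according to the definitions of the relations $\trans$ and $\eotrans$'', and your one-step simulation via the abstraction map $\alpha$ (two leader edges being impossible since the leader token is unique), followed by induction on the execution length and identification of the endpoint configurations, is precisely that deduction spelled out. Your parity accounting for Case~2 is also sound, since the parity of the net change of $C(q)$ equals the parity of the number of occurrences of $q$ among $q_1,q_2,q_3,q_4$, which is what the two-stage definition through $\gamma'$ computes.
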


The two next lemmas show that our abstraction is sound for C.O.P. The first one  can be proved by induction on the length of the path in $(\Confseo,\eotrans)$ using Point 1. of Lemma \ref{lem:pumpevenodd}. 

\begin{lemma}
  \label{lem:presound}
  If  $(q^L_i,\gamma^\even_i) \eotrans^\ast(q^L,\gamma) $ [resp.  $(q^L_i,\gamma^\odd_i) \eotrans^\ast(q^L,\gamma) $] then there exists $n \in \nat\setminus\set{0}$ such that $n$ is even [resp. $n$ is odd] and $C^{(n)}_i \trans^\ast C$ with $C \in \Interp{(q^L,\gamma)}$.
\end{lemma}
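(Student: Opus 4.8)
The plan is to prove the ``even'' statement by induction on the length $\ell$ of the abstract run $(q^L_i,\gamma^\even_i)\eotrans^\ast(q^L,\gamma)$; the ``odd'' statement is symmetric, replacing even by odd everywhere. For $\ell=0$ we have $(q^L,\gamma)=(q^L_i,\gamma^\even_i)$, and it suffices to take $n=2$ with $C=C^{(2)}_i$: this configuration has $2$ processes in $q_i$ (an even number) and none in any other process state, so $C^{(2)}_i\in\Interp{(q^L_i,\gamma^\even_i)}$, and $C^{(2)}_i\trans^\ast C^{(2)}_i$ holds trivially. In the odd case one takes $n=1$ and $C=C^{(1)}_i$, which lies in $\Interp{(q^L_i,\gamma^\odd_i)}$ since $q_i$ then carries an odd number of processes and all other process states carry none.

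For the inductive step I would write the run as $(q^L_i,\gamma^\even_i)\eotrans^\ast(q_1^L,\gamma_1)\leotrans{e,e'}(q^L,\gamma)$ and apply the induction hypothesis to the prefix, obtaining an even $n_1>0$ and a configuration $C_1\in\Interp{(q_1^L,\gamma_1)}$ with $C^{(n_1)}_i\trans^\ast C_1$. The goal is to realise the final abstract step $\leotrans{e,e'}$ by a concrete rendez-vous, and the obstacle is that $C_1$ may not hold enough processes in the source states of $e,e'$ to fire it. This is exactly where Point~1 of Lemma~\ref{lem:pumpevenodd} is used: I define $C'_1\geq C_1$ by adding $2$ processes to every process source state occurring in $e$ or $e'$, so each such state then holds at least $2$ processes (covering both distinct sources and the clause~2 case $q_1=q_3$, where two processes are required). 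Adding even quantities leaves every parity unchanged, hence $C'_1\in\Interp{(q_1^L,\gamma_1)}$ and $(C_1(q)=C'_1(q))\mod 2$ for all $q$; Point~1 of Lemma~\ref{lem:pumpevenodd} then yields $C^{(m)}_i\trans^\ast C'_1$ with $m=|C'_1|-1$. Since $m-n_1$ is a sum of even numbers, $m$ is again even and $m\geq n_1>0$.

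Because the source states of $e,e'$ are now occupied in $C'_1$ (and the leader, being in $q_1^L$, provides the second partner in clause~1 since leader and process states are disjoint), the concrete rendez-vous can be fired, giving $C'_1\trans C$ and hence $C^{(m)}_i\trans^\ast C$. It then remains to check that $C\in\Interp{(q^L,\gamma)}$, which I would do by matching the effect of the fired rendez-vous against the two clauses defining $\eotrans$: the leader ends in $q^L$ (clause~1) or remains in $q^L=q_1^L$ (clause~2), and removing one process from each source state while adding one to each target flips precisely the parities prescribed for $\gamma$, with the cancellations when a source coincides with a target correctly captured by the two-step formulation of clause~2. The only delicate bookkeeping is the simultaneous control of source-state occupancy and of the parity of the final process count $m$; both are discharged uniformly by the ``add an even number of processes'' device, which guarantees occupancy while keeping all parities, so that Point~1 of Lemma~\ref{lem:pumpevenodd} applies without altering the class $\Interp{(q_1^L,\gamma_1)}$.
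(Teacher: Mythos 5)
Your proof is correct and follows essentially the same route as the paper's: induction on the length of the abstract run, using Point~1 of Lemma~\ref{lem:pumpevenodd} to pump in pairs of processes so that the source states of the final rendez-vous are occupied without disturbing any parity. If anything, your treatment is slightly more explicit than the paper's (which handles the process--process case by saying it is ``similar''), notably in observing that adding two processes per source state also covers the case where both participating processes start in the same state.
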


\begin{proof}
  Assume $(q^L_i,\gamma^\even_i) \leotrans{e_1,e'_1} (q^L_1,\gamma_1) \leotrans{e_2,e'_2}(q^L_2,\gamma_2)\leotrans{e_3,e'_3} \ldots \leotrans{e_k,e'_k} (q^L_k,\gamma_k)$ with $(q^L_k,\gamma_k)=(q^L,\gamma)$ . We reason by induction on $k$. For $k=0$, we have $C^{(2)}_i \in \Interp{(q^L_i,\gamma^\even_i)}$, hence the property holds. Now suppose $k>1$ and that the property holds for $k-1$. Hence there exists $C' \in \Interp{(q^L_{k-1},\gamma^\even_{k-1})}$ and $n \in \nat\setminus\set{0}$ such that $n$ is even and $C^{(n)}_i \trans^\ast C'$. We have two cases:
  \begin{enumerate}
  \item $e_k=(q^L_{k-1},a,q^L_{k})$ and $e'_k=(q_1,a,q_2)$ (in other words the pair $(e_k,e'_k)$ involves a transition of the leader). Then to take this rendez-vous from $C'$, we need to have $C'(q_1)>0$ but it might not be the case. However by 1. of Lemma \ref{lem:pumpevenodd}, if we consider the configuration $C''$ such that $C''(q)=C'(q)$ for all $q \in Q \setminus \set{q_1}$ and $C''(q_1)=C'(q_1)+2$ then $C^{(n+2)}_i \trans^\ast C''$. Note that by definition $C'' \in \Interp{(q^L_{k-1},\gamma^\even_{k-1})}$. From $C''$ the rendez-vous between edges $e_k$ and $e'_k$ can take place and it leads to a configuration $C$, hence $C^{(n+2)}_i \trans^\ast C'' \trans C$,  and by definition of $\eotrans$ we have necessarily that $C \in \Interp{(q^L,\gamma)}$.
  \item The case where $e_k,e'_k \in Q_P \times \Rv{\Sigma} \times Q_P$ can be treated similarly always thanks to Point 1. of Lemma \ref{lem:pumpevenodd}.
  \end{enumerate}
The proof for the  case where $n$ is odd is identical.
\end{proof}

Using Point 2. of Lemma \ref{lem:pumpevenodd} we obtain the soundness of our abstraction.

\begin{lemma}[Soundness]
  \label{lem:eosound}
   If  $(q^L_i,\gamma^\even_i) \eotrans^\ast(q^L_f,\gamma^\even_f) $ [resp.  $(q^L_i,\gamma^\odd_i) \eotrans^\ast(q^L_f,\gamma^\odd_f) $] then there exists $n \in \nat$ such that $n$ is even [resp. $n$ is odd] and $C^{(n)}_i \trans^\ast C^{(n)}_f$.
 \end{lemma}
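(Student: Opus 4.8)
The plan is to combine Lemma \ref{lem:presound} with Point 2 of Lemma \ref{lem:pumpevenodd}; the former already does most of the work. Applied to the abstract path $(q^L_i,\gamma^\even_i) \eotrans^\ast (q^L_f,\gamma^\even_f)$ with target abstract configuration $(q^L,\gamma)=(q^L_f,\gamma^\even_f)$, Lemma \ref{lem:presound} yields an even $n \in \nat\setminus\set{0}$ and a concrete configuration $C \in \Interp{(q^L_f,\gamma^\even_f)}$ with $C^{(n)}_i \trans^\ast C$. The only thing missing is that $C$ need not be the exact final configuration $C^{(n)}_f$: it merely has the leader in $q^L_f$ and an even number of processes in every process state. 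So the remaining task is to funnel all of these stray (even-sized) groups of processes into $q_f$.

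First I would unfold what $C \in \Interp{(q^L_f,\gamma^\even_f)}$ means: $C(q^L_f)=1$, $C(q)=0$ for every other leader state, and $C(q)$ is even for every $q \in Q_P$ (since $\gamma^\even_f(q)=\even$ everywhere). Because $\trans$ preserves the total size, $|C|=|C^{(n)}_i|=n+1$, so the hypothesis $C^{(|C|-1)}_i \trans^\ast C$ of Lemma \ref{lem:pumpevenodd} holds with $|C|-1=n$.

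Next I would invoke Point 2 of Lemma \ref{lem:pumpevenodd} on $C$ with $C'=C^{(n)}_f$, checking its three conditions: (i) $|C'|=|C|=n+1$; (ii) $C'(q) \leq C(q)$ for all $q \in Q\setminus\set{q_f}$, which holds since $C'(q)=0$ for every process state other than $q_f$ and $C'(q^L_f)=1=C(q^L_f)$ on the leader side; and (iii) $C(q) \equiv C'(q) \pmod 2$ for all $q \in Q$, which holds because every process count of $C$ is even, $C'(q)=0$ for $q \in Q_P \setminus \set{q_f}$, and both $C(q_f)$ and $C'(q_f)=n$ are even. The lemma then delivers $C^{(n)}_i \trans^\ast C^{(n)}_f$, exactly the conclusion sought. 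Intuitively, condition (iii) guarantees that $C(q)-C'(q)$ is even for every $q$, so for each $q \neq q_f$ we remove an even number of processes, i.e.\ pairs, which is precisely the operation Point 2 of Lemma \ref{lem:pumpevenodd} supports (moving pairs along a path to $q_f$).

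For the odd case the argument is identical, replacing $\gamma^\even$ by $\gamma^\odd$: here $C \in \Interp{(q^L_f,\gamma^\odd_f)}$ has $C(q_f)$ odd and all other process counts even, while $C^{(n)}_f$ with $n$ odd has $C^{(n)}_f(q_f)=n$ odd and $0$ elsewhere, so the parity at $q_f$ still matches and conditions (i)--(iii) go through verbatim. There is no genuine obstacle in this proof; the one point I would double-check is the parity bookkeeping of condition (iii) at $q_f$, namely that the parity of $C(q_f)$ dictated by $\gamma^\even_f$ (resp.\ $\gamma^\odd_f$) agrees with the parity of $n$. Everything else is a routine consequence of the definitions of $\Interp{\cdot}$ and of the two parts of Lemma \ref{lem:pumpevenodd}.
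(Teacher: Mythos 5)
Your proof is correct and follows exactly the paper's own argument: apply Lemma \ref{lem:presound} to obtain an even (resp.\ odd) $n$ and a configuration $C \in \Interp{(q^L_f,\gamma^\even_f)}$ reachable from $C^{(n)}_i$, then invoke Point 2 of Lemma \ref{lem:pumpevenodd} with $C' = C^{(n)}_f$ to empty all remaining process states into $q_f$. Your version merely spells out the verification of the three conditions of Point 2 (size, domination outside $q_f$, parity) in more detail than the paper does, which is a welcome addition rather than a deviation.
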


 \begin{proof}
Assume $(q^L_i,\gamma^\even_i) \eotrans^\ast(q^L_f,\gamma^\even_f)$. Then thanks to Lemma \ref{lem:presound}, we know that there exists $n \in \nat \setminus \set{0}$ such that $n$ is even and $C^{(n)}_i \trans^\ast C$ with $C \in \Interp{(q^L_f,\gamma^\even_f)}$. Note that by definition of $\trans$, we have $|C|=n+1$. Consider the configuration $C'$ such that $C'(q^L_f)=1$ and $C'(q_f)=n$ and $C'(q)=0$ for all $q \in Q \setminus \set{q^L_f,q_f}$, then using Point 2. of Lemma \ref{lem:pumpevenodd}, we have $C^{(n)}_i \trans^\ast C'$ and $C'=  C^{(n)}_f$. The case where $n$ is odd can be treated similarly.
\end{proof}

Thanks to the Lemmas \ref{lem:symevenodd}, \ref{lem:eocomplete} and \ref{lem:eosound} to solve the C.O.P. when the considered rendez-vous protocol is symmetric it is enough to check whether $(q^L_i,\gamma^\even_i) \eotrans^\ast(q^L_f,\gamma^\even_f) $ and $(q^L_i,\gamma^\odd_i) \eotrans^\ast(q^L_f,\gamma^\odd_f) $. But since the transition system $(\Confseo,\eotrans)$ has a finite number of vertices whose number is bounded by $|Q_L|\cdot 2^{|Q_P|}$, these two reachability questions can be solved in \textsc{NPspace} in $|Q|$. By Savitch's  theorem, we obtain the following result.

\begin{theorem}
  \label{thm:pspace:symmetric}
C.O.P. restricted to symmetric rendez-vous protocols is in \textsc{PSpace}.
  \end{theorem}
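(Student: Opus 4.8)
The plan is to convert the symmetric C.O.P.\ into a pair of reachability queries in the finite even-odd abstraction and then settle those queries by an on-the-fly polynomial-space search. First I would chain the three preceding lemmas into a single equivalence. By Lemma~\ref{lem:symevenodd}, a cut-off exists for $\Prot$ iff there are both an even $n$ and an odd $n$ with $C^{(n)}_i \trans^\ast C^{(n)}_f$. The completeness direction (Lemma~\ref{lem:eocomplete}) lifts each such concrete witness to a path in $(\Confseo,\eotrans)$, and the soundness direction (Lemma~\ref{lem:eosound}) recovers a concrete witness from such a path. Combining the two, the existence of a cut-off is equivalent to the conjunction of the two abstract reachability facts $(q^L_i,\gamma^\even_i) \eotrans^\ast (q^L_f,\gamma^\even_f)$ and $(q^L_i,\gamma^\odd_i) \eotrans^\ast (q^L_f,\gamma^\odd_f)$, so the problem is reduced to deciding reachability in the finite graph $(\Confseo,\eotrans)$.

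Next I would argue that each of these reachability questions lies in \textsc{NPSpace} in $|Q|$. An even-odd configuration $(q^L,\gamma)\in\Confseo$ is encoded by the leader state $q^L$, needing $O(\log|Q_L|)$ bits, together with a single parity bit per process state, i.e.\ $|Q_P|$ bits; its description is thus polynomial in $|Q|$. The edge relation $\eotrans$ is decidable in polynomial time: given a source and a candidate target, one iterates over the at most $|E|^2$ pairs $(e,e')$ and tests the two defining cases of $\leotrans{e,e'}$, each a constant-depth comparison of the parity vectors. A nondeterministic machine therefore decides reachability by storing the current configuration, guessing a successor, verifying the guessed $\eotrans$-step, and overwriting the current configuration; to guarantee termination it keeps a step counter bounded by the number of configurations $|Q_L|\cdot 2^{|Q_P|}$, which fits in $O(\log|Q_L| + |Q_P|)$ bits. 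Every component of this search uses space polynomial in $|Q|$, so the abstract reachability problem is in \textsc{NPSpace}.

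Finally, a \textsc{PSpace} machine runs the even search and the odd search one after the other, reusing its workspace, and accepts iff both succeed; since \textsc{NPSpace} $=$ \textsc{PSpace} by Savitch's theorem, the whole decision procedure stays in \textsc{PSpace}. The one point that deserves care—and the closest thing to an obstacle—is that the abstract state space is of exponential size $2^{|Q_P|}$ and so cannot be materialized; the argument sidesteps this by observing that each configuration has a polynomial-size encoding and that successors can be recomputed locally, which is exactly what the on-the-fly \textsc{NPSpace} search exploits. All the genuine combinatorial content—that parities of process counts are a sufficient invariant and that pairs of processes can always be pumped along paths—has already been discharged in Lemmas~\ref{lem:pumpevenodd} through \ref{lem:eosound}, so only this routine complexity bookkeeping remains.
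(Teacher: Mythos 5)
Your proof is correct and follows exactly the paper's own argument: reduce the symmetric C.O.P.\ via Lemmas \ref{lem:symevenodd}, \ref{lem:eocomplete} and \ref{lem:eosound} to two reachability questions in the finite even-odd abstraction $(\Confseo,\eotrans)$, solve them nondeterministically in polynomial space, and conclude by Savitch's theorem. The only difference is that you spell out the on-the-fly \textsc{NPSpace} search (polynomial-size configuration encoding, locally checkable transitions, exponential step counter) that the paper leaves implicit.
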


\section{Supressing the leader}
\label{sec:noleader}

\subsection{Definition and properties}

A rendez-vous protocol
$\Prot=\tuple{Q,Q_P,Q_L,\Sigma,q_i,q_f,\linebreak[0]q^L_i,q^L_f,E}$
has \emph{no leader} when $Q_L=\set{q^L_f}$ and $q^L_i=q^L_f$ and the
transition relation does not refer to the state in $Q_L$, i.e. $E
\subseteq Q_P \times \Rv{\Sigma} \times Q_P$.  We can then
assume that $\Prot=\tuple{Q_P,\Sigma,q_i,q_f,E}$ and 
delete any reference to the leader state. We suppose 
again w.l.o.g. that in the considered rendez-vous
protocols without leader there is a path
from $q_i$ to $q$ and a path from $q$ to $q_f$  for all $q$ in $Q_P$. Rendez-vous protocols
with no leader
enjoy some properties easing the resolution of the  C.O.P.

\begin{lemma}
\label{lem:ntonplsuone}
  Let $\Prot=\tuple{Q_P,\Sigma,q_i,q_f,E}$ be  a rendez-vous protocol
  with  no leader. Then the following properties hold:
  \begin{enumerate}
   \item If $C^{(n)}_i \trans^\ast C^{(n)}_f$ and $C^{(m)}_i
     \trans^\ast C^{(m)}_f$ for $m,n \in \nat$, then  $C^{(n+m)}_i \trans^\ast C^{(n+m)}_f$.
 \item  There exists $B \in \nat$ such that $C^{(n)}_i \trans^\ast C^{(n)}_f$ for
  all $n \geq B$ iff there
  exists $N \in \nat$ such that  $C^{(N)}_i \trans^\ast C^{(N)}_f$ and
  $C^{(N+1)}_i \trans^\ast C^{(N+1)}_f$.
  \end{enumerate}
\end{lemma}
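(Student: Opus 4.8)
The plan is to establish Point 1 by a disjoint simulation argument that crucially uses the absence of a leader, and then to derive Point 2 from Point 1 by a standard numerical-semigroup (Frobenius) argument.

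For Point 1, fix two executions $C^{(n)}_i = D_0 \trans D_1 \trans \cdots \trans D_k = C^{(n)}_f$ and $C^{(m)}_i = F_0 \trans F_1 \trans \cdots \trans F_l = C^{(m)}_f$. The idea is to run these one after the other inside a single network of $n+m$ processes, devoting $n$ processes to the first execution and $m$ to the second. First I would keep $m$ processes ``parked'' in $q_i$: let $D'_j$ be the configuration obtained from $D_j$ by adding $m$ tokens to $q_i$. Since a rendez-vous only requires enough processes in the source states, and since—there being no leader—every edge used lies in $Q_P \times \Rv{\Sigma} \times Q_P$, adding tokens never disables a transition; hence each $D_j \trans D_{j+1}$ lifts to $D'_j \trans D'_{j+1}$ by firing the very same edges, and one checks $D'_{j+1}$ is exactly $D_{j+1}$ augmented with the same $m$ parked tokens. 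Thus $C^{(n+m)}_i = D'_0 \trans^\ast D'_k$, where $D'_k$ has $n$ processes in $q_f$ and $m$ still parked in $q_i$. For the second phase I would symmetrically park the $n$ finished processes in $q_f$: letting $F'_j$ be $F_j$ augmented with $n$ tokens in $q_f$, the same observation gives $F'_j \trans F'_{j+1}$ for every $j$ (extra tokens in $q_f$ cannot block a process edge), while $F'_0 = D'_k$ and $F'_l = C^{(n+m)}_f$. Chaining the two phases yields $C^{(n+m)}_i \trans^\ast C^{(n+m)}_f$.

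I expect the only delicate points to be the routine verifications that the lifted transitions remain firable and that the two phases chain correctly at $D'_k = F'_0$. This is exactly where the no-leader hypothesis is indispensable: with a genuine leader the two sub-executions would each need to drive the single leader token, so they could not be sequentialised in one network, and the additivity would fail.

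For Point 2, the forward implication is immediate, since any cut-off $B$ already witnesses $N = B$ (both $N$ and $N+1$ exceed $B$). For the converse, Point 1 shows that $S = \set{n \in \nat \mid C^{(n)}_i \trans^\ast C^{(n)}_f}$ is closed under addition, and $0 \in S$ because $C^{(0)}_i = C^{(0)}_f$ (there are no processes and the leader state is unique, so $q^L_i = q^L_f$). Given $N, N+1 \in S$, coprimality of $N$ and $N+1$ yields a Frobenius-type conclusion: writing any $n \geq N^2$ as $n = qN + r$ with $0 \leq r < N$, we have $n = (q-r)N + r(N+1)$ with $q - r \geq 0$ (since $n \geq N^2$ forces $q \geq N > r$). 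By the additivity of $S$ it follows that $n \in S$ for all $n \geq N^2$, so $B = N^2$ is a cut-off, which completes the proof.
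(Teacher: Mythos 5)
Your proof is correct and follows essentially the same route as the paper: Point 1 by parking the $m$ extra processes in $q_i$, running the first execution, and then running the second on the parked processes (the paper phrases this via the intermediate configuration $C$ with $C(q_i)=m$, $C(q_f)=n$), and Point 2 by the same Frobenius-style decomposition $n=(A-R)\cdot N + R\cdot(N+1)$ yielding the identical cut-off $N^2$. Your explicit lifting argument for firability under added tokens just spells out what the paper leaves as "it is clear".
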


\begin{proof}
  \begin{enumerate}
  \item This point is  a direct consequence of the semantics of
  rendez-vous protocols associated with the fact that there is no
  leader. In fact assume $C^{(n)}_i \trans^\ast C^{(n)}_f$ and $C^{(m)}_i
     \trans^\ast C^{(m)}_f$. And consider the configuration $C$ such
     that $C(q_i)=m$, $C(q_f)=n$  and $C(q)=0$ for all $q \in Q_P
     \setminus \set{q_i,q_f}$. Then it is clear that we have
     $C^{(n+m)}_i \trans^\ast C \trans^{\ast} C^{(n+m)}_f$,  the first
     part of this execution mimicking the execution $C^{(n)}_i
     \trans^\ast C^{(n)}_f$ and the last part mimics the execution $C^{(m)}_i
     \trans^\ast C^{(m)}_f$ on the $m$ processes left in $q_i$ in $C$.
  \item If  there exists $B \in \nat$ such that $C^{(n)}_i \trans^\ast
    C^{(n)}_f$  for
  all $n \geq B$, then we
  have $C^{(B)}_i \trans^\ast C^{(B)}_f$ and $C^{(B+1)}_i \trans^\ast
  C^{(B+1)}_f$. Assume now that there exists $N \in \nat$ such that
  $C^{(N)}_i \trans^\ast C^{(N)}_f$ and $C^{(N+1)}_i \trans^\ast
  C^{(N+1)}_f$. We show that for all $n \geq N^2$, we have
  $C^{(n)}_i \trans^\ast C^{(n)}_f$. Let $n \geq N^2$ and let $R \in
  [0,N-1]$ be such that $(n=R) \mod N$. By definition of the modulo,
  there exists $A \geq 0$ such that $n=A\cdot N + R$.  Since $ n \geq
  N^2$, we have necessarily $A \geq N$.  As a consequence we can rewrite $n$ as: $n = R\cdot(N+1) +(A-R)\cdot  N$. But then since
  $C^{(N)}_i \trans^\ast C^{(N)}_f$, by 1. we have $C^{((A-R)\cdot N)}_i
  \trans^\ast C^{((A-R)\cdot N)}_f$ and since  $C^{(N+1)}_i \trans^\ast
  C^{(N+1)}_f$, by 1. we have $C^{(R \cdot(N+1))}_i \trans^\ast
  C^{(R \cdot (N+1))}_f$. By a last application of 1. we get $C^{(n)}_i \trans^\ast C^{(n)}_f$.
  \end{enumerate}
 
\end{proof}

\subsection{The symmetric case}

We will now see how the procedure proposed in the proof of Theorem
\ref{thm:pspace:symmetric} to solve in polynomial space the C.O.P. for
symmetric rendez-vous protocols can be simplified when there is no leader.   Let
$\Prot=\tuple{Q_P,\Sigma,q_i,q_f,E}$ be  a symmetric rendez-vous protocol with
no leader and let $(\Confseo,\eotrans)$ be the abstract transition
system of $(\Confs,\trans)$ as defined in Section
\ref{subsec:evenodd}. If we adapt the results of  Lemmas
\ref{lem:symevenodd}, \ref{lem:eocomplete} and \ref{lem:eosound} to
the no leader case, we deduce that to solve the C.O.P. it is enough to
check whether $\gamma^\even_i \eotrans^\ast \gamma^\even_f $ and
$\gamma^\odd_i \eotrans^\ast \gamma^\odd_f $ (we have deleted the
leader states from these results). Note that  by definition
$\gamma^\even_i=\gamma^\even_f$, hence the only
thing to verify is if  $\gamma^\odd_i
\eotrans^\ast\gamma^\odd_f $ holds. This check can be made efficiently
using the fact that there is no leader, because any  reodering of a
path is still a path in $(\Confseo,\eotrans)$  (since 
we do not need to worry anymore about the leader state) and we can
delete the pairs of edges that consecutively repeat since they have  the same action
on the parity.

\begin{lemma}
  \label{lem:shortpath}
  If $\gamma  \eotrans^\ast \gamma'$ then  there exists $k  \leq
  |E|^2$ and $e_1,e'_1,e_2,e'_2,\ldots,e_k,\linebreak[0]e'_k \in E$
  such that  $\gamma \leotrans{e_1,e'_1} \gamma_1 \leotrans{e_2,e'_2} \ldots
\leotrans{e_k,e'_k} \gamma'$.
\end{lemma}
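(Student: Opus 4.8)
The plan is to exploit the fact that, once the leader is removed, the even-odd transition relation becomes a \emph{commutative group action with no enabling condition}. First I would record the effect of a single step. In the no-leader case only case~2 in the definition of $\eotrans$ can apply, since there is no leader edge to play the role of $e$ in case~1, and the side condition $q^L_1=q^L_2$ is vacuous. Reading off the two intermediate updates, a step $\gamma \leotrans{e,e'} \gamma''$ with $e=(q_1,a,q_2)$ and $e'=(q_3,a,q_4)$ yields $\gamma'' = \gamma \oplus \delta(e,e')$, where $\oplus$ denotes componentwise addition modulo $2$ on $\set{\even,\odd}^{Q_P}$ (so that $\set{\even,\odd}^{Q_P}$ is the group $(\mathbb{Z}/2\mathbb{Z})^{Q_P}$) and $\delta(e,e')$ is the fixed vector obtained by flipping the parities of $q_1,q_2$ (unless $q_1=q_2$) and then of $q_3,q_4$ (unless $q_3=q_4$); the possible coincidences of states are absorbed automatically by the XOR. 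The crucial point to stress is that this step is enabled from \emph{every} $\gamma$: the only requirement is that $e,e'\in E$ share the communication letter $a$, and no constraint whatsoever is placed on the current parities $\gamma$.

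From this I would derive the two algebraic facts that drive the whole argument. Since $\oplus$ is commutative and associative, the endpoint of a path $\gamma \leotrans{e_1,e'_1} \gamma_1 \leotrans{e_2,e'_2} \cdots \leotrans{e_n,e'_n} \gamma'$ equals $\gamma \oplus \bigoplus_{i=1}^{n} \delta(e_i,e'_i)$, hence depends only on the \emph{multiset} of pairs used and not on their order; and because every intermediate step is enabled unconditionally, any reordering of the pairs is again a legal path with the same endpoint. Moreover $\delta(e,e') \oplus \delta(e,e') = \vect{0}$, so using any given pair an even number of times has no net effect.

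Finally I would combine these observations. Given a witnessing path for $\gamma \eotrans^\ast \gamma'$, let $P'$ be the set of pairs $(e,e')$ occurring an odd number of times along it. By the displayed formula together with the order-two property, $\gamma' = \gamma \oplus \bigoplus_{(e,e')\in P'} \delta(e,e')$, and since each step is unconditionally enabled, enumerating the elements of $P'$ in any order produces a legal path $\gamma \leotrans{e_1,e'_1} \gamma_1 \leotrans{e_2,e'_2} \cdots \leotrans{e_k,e'_k} \gamma'$ with $k = |P'|$. As $P'$ is a set of pairs of edges of $\Prot$, we get $k \leq |E|^2$, which is exactly the claimed bound. The only step demanding care is the first one, namely checking that in the leaderless even-odd abstraction the relation $\eotrans$ imposes no precondition on $\gamma$; this is precisely where the no-leader hypothesis is essential, since with a leader the control-state bookkeeping in case~1 would forbid arbitrary reorderings. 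Once that is established, commutativity and the fact that each generator $\delta(e,e')$ has order two make the short-path bound immediate.
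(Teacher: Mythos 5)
Your proof is correct, and it reaches the same bound as the paper, but by a genuinely different mechanism. The paper argues by path surgery: if $k > |E|^2$, the pigeonhole principle gives two indices carrying the same pair $(e_i,e'_i)=(e_j,e'_j)$; using the fact that in the leaderless abstraction consecutive steps can be swapped, the two occurrences are brought adjacent, and since firing the same pair twice is the identity on parities, both are deleted; this shortening is iterated until the run has length at most $|E|^2$. You instead make the algebraic structure explicit: every step is unconditionally enabled and acts as XOR with a fixed vector $\delta(e,e') \in (\mathbb{Z}/2\mathbb{Z})^{Q_P}$, so the endpoint of any run is $\gamma \oplus \bigoplus_i \delta(e_i,e'_i)$, depending only on the multiset of pairs used, and pairs of even multiplicity cancel; you then discard the original run entirely and build a fresh one firing each odd-multiplicity pair exactly once. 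Both proofs rest on the same two facts --- commutation and involutivity of steps, each a consequence of the no-leader hypothesis --- but yours packages them as a free action of $(\mathbb{Z}/2\mathbb{Z})^{Q_P}$ with a direct construction, rather than an iterative rewriting of the given run. Your version buys three things: it is non-iterative; it makes explicit the linchpin observation that $\eotrans$ places no precondition on $\gamma$ (the paper uses this only implicitly inside its commutation claim); and it yields the slightly stronger conclusion that the short witnessing path can be taken to use each pair of edges at most once. The paper's version buys economy of machinery: it never names the group structure and only permutes and deletes steps inside the run it is handed.
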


\begin{proof}
Assume $\gamma \leotrans{e_1,e'_1} \gamma_1 \leotrans{e_2,e'_2} \ldots
\leotrans{e_k,e'_k} \gamma'$  with $k > |E|^2$. Consequently there
exists $i,j \in [1,k]$ such that $i \neq j$ and $(e_i,e'_i)=(e_j,e'_j)$. Note that
according to the semantics of $\eotrans$, when there is no leader, if we have $\gamma''
\leotrans{e,e'} \gamma''_1 \leotrans{d,d'} \gamma''_2$
then  we also have  $\gamma''
\leotrans{d,d'} \gamma''_3 \leotrans{e,e'} \gamma''_2$  and furthermore
if $(e,e')=(d,d')$ then $\gamma'' =\gamma''_2$. As a consequence, we can
assume that $j=i+1$ (otherwise we can reorder the run) and that
$\gamma_i=\gamma_{i+2}$. This allows us to shorten the execution from
$\gamma$ to $\gamma'$ by deleting the edges
$\leotrans{e_i,e'_i}\leotrans{e_{i+1},e'_{i+1}}$. We can repeat this
  operation until we obtain a run of length strictly smaller that $|E|^2$.
\end{proof}

It means that if $\gamma^\odd_i
\eotrans^\ast\gamma^\odd_f $ then there is a path of polynomial
length (in the size of $\Prot$) between these two abstract
configurations. It is hence enough to guess such a sequence of
polynomial length and to check that it
effectively corresponds to a path in $(\Confseo,\eotrans)$.

\begin{theorem}
  C.O.P. for symmetric rendez-vous protocols with no leader is in \textsc{NP}.
  \end{theorem}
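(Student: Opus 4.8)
The plan is to reduce the cut-off problem to a single reachability query in the finite even-odd abstraction $(\Confseo,\eotrans)$ and then exploit the polynomial bound on witnessing paths provided by Lemma~\ref{lem:shortpath}. First I would invoke the no-leader analogues of Lemmas~\ref{lem:symevenodd}, \ref{lem:eocomplete} and \ref{lem:eosound}, which together state that $\Prot$ admits a cut-off iff $\gamma^\even_i \eotrans^\ast \gamma^\even_f$ and $\gamma^\odd_i \eotrans^\ast \gamma^\odd_f$. Since $\gamma^\even_i = \gamma^\even_f$ by definition (both assign $\even$ to every state of $Q_P$), the empty path already witnesses the first reachability, so the whole problem collapses to deciding the single query $\gamma^\odd_i \eotrans^\ast \gamma^\odd_f$.

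Next I would design the nondeterministic algorithm. By Lemma~\ref{lem:shortpath}, whenever $\gamma^\odd_i \eotrans^\ast \gamma^\odd_f$ holds there is a witnessing path $\gamma^\odd_i \leotrans{e_1,e'_1} \gamma_1 \leotrans{e_2,e'_2} \cdots \leotrans{e_k,e'_k} \gamma^\odd_f$ with $k \leq |E|^2$. The certificate to be guessed is therefore the sequence of pairs of edges $(e_1,e'_1),\ldots,(e_k,e'_k) \in E \times E$; since each edge is identified in $O(\log|E|)$ bits and there are at most $|E|^2$ pairs, the certificate has size polynomial in $|\Prot|$. The verifier starts from $\gamma^\odd_i$ and, step by step, checks that each guessed pair $(e_i,e'_i)$ yields a legal transition $\gamma_{i-1} \leotrans{e_i,e'_i} \gamma_i$ — that is, it identifies which of the two clauses in the definition of $\eotrans$ applies and recomputes the parity vector accordingly — and finally checks that the last configuration reached equals $\gamma^\odd_f$. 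Each step is a simple manipulation of a parity vector over $Q_P$ and runs in time polynomial in $|\Prot|$, so the whole verification is polynomial.

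Putting these together gives the NP procedure: guess a certificate of polynomial length and verify it in polynomial time. The only genuine content is the two ingredients I am assuming, namely the soundness and completeness of the even-odd abstraction (which reduces C.O.P. to abstract reachability) and, more importantly, Lemma~\ref{lem:shortpath}, which is exactly what guarantees a \emph{polynomial}-size witness and hence membership in \textsc{NP} rather than merely \textsc{PSpace}. The step I expect to be most delicate is making the no-leader adaptations of Lemmas~\ref{lem:symevenodd}, \ref{lem:eocomplete} and \ref{lem:eosound} fully precise, in particular checking that the trivial even case really does re-inflate to a valid even-sized run $C^{(n)}_i \trans^\ast C^{(n)}_f$; but these adaptations follow the same pattern as the leader case already carried out in Section~\ref{subsec:evenodd}.
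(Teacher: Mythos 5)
Your proposal is correct and follows essentially the same route as the paper: reduce C.O.P. to the single abstract reachability query $\gamma^\odd_i \eotrans^\ast \gamma^\odd_f$ (using the no-leader adaptations of Lemmas~\ref{lem:symevenodd}, \ref{lem:eocomplete}, \ref{lem:eosound} and the observation $\gamma^\even_i=\gamma^\even_f$), then use Lemma~\ref{lem:shortpath} to get a polynomial-size witnessing path that a nondeterministic verifier can check step by step. The delicate point you flag (re-inflating the trivial even case) indeed goes through exactly as in Section~\ref{subsec:evenodd}, via Point~1 of Lemma~\ref{lem:pumpevenodd}.
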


\subsection{Upper bound for the C.O.P. with no leader}

We now prove that the C.O.P. for rendez-vous protocols with no
leader reduces to the reversible reachability problem in Petri
nets. Let $\Prot=\tuple{Q_P,\Sigma,q_i,q_f,E}$  be a rendez-vous protocol
with no leader and such that w.l.o.g. there is no edge going out of
$q_f$\footnote{To achieve this, we can simply duplicate $q_f$ adding a new final state
$q'_f$ and for each edge going into $q_f$ we add an edge from
the same state to $q'_f$}.

Let  $\Net_\Prot=\tuple{P,T,Pre,Post}$ be the Petri net whose construction is
provided in Section \ref{sec:topetri} (where we have removed all the
places corresponding to leader states as well as the transition
$t^L_f$). From $\Net_\Prot $, we build the reverse Petri net $\Net^R_\Prot$
obtained  by keeping the same set of places and reversing all the
transitions. Formally  $\Net^R_\Prot=\tuple{P^R,T^R,Pre^R,Post^R}$,
where $P^R=\set{p^R \mid p \in P}$, $T^R=\set{t^R \mid t \in T}$ and
for all $p^R \in P^R$ and $t^R \in T^R$, we have
$Pre^R(t^R)(p^R)=Post(t)(p)$ and $Post^R(t^R)(p^R)=Pre(t)(p)$. Let
$M^R_0$ be the marking such that $M^R_0(p^R)=0$ for all $p^R \in P^R$
and  $(M^{R,(n)}_f)_{\set{n \in \nat}}$ be the family of markings verifying
$M_f^{R,(n)}(p^R_{q_f})=n$ and $M_f^{R,(n)}(p)=0$ for all $p \in P^R \setminus
\set{p^R_{q_f}}$. A direct consequence of Lemma \ref{lem:reductionCOP}
and of the definition of $\Net^R_\Prot$ is that $C^{(n)}_i \trans^\ast C^{(n)}_f$ iff
$M^R_0 \in \Reach{M_f^{R,(n)}}$ for all $n \in \nat$.

From $\Net_\Prot$ and $\Net^R_\Prot$, we build the Petri net
$\Net'_\Prot$ obtained by taking the disjoint unions of places and
transitions of the two nets except for the place $p_{q_f}$ and
$p^R_{q_f}$ which are merged in a single place $p_{q_f}$. Formally,
$\Net'_\Prot=\tuple{P',T',Pre',Post'}$ where $P'=(P \cup P^R) \setminus
\set{p^R_{q_f}}$, $T'=T \cup T^R$, $Pre'(t)(p)= Pre(t)(p)$ and
 $Post'(t)(p)= Post(t)(p)$ and  $Pre'(t)(p^R)=Post'(t)(p^R)=0$ for all
$p \in P$, $p^R\in P^R$ and $t \in T$, $Pre'(t^R)(p^R)= Pre^R(t^R)(p^R)$ and
$Post'(t^R)(p^R)= Post^R(t^R)(p^R)$ and $Pre'(t^R)(p)=Post'(t^R)(p)=0$
for all $p^R \in P^R$, $p \in P \setminus\set{p_{q_f}}$ and $t \in T$,
and $Pre'(t^R)(p_{q_f})=Pre^R(t^R)(p^R_{q_f}))$ and
$Post'(t^R)(p_{q_f})=Post^R(t^R)(p^R_{q_f}))$  (this last case
corresponds to the merging of  $p_{q_f}$ and $p^R_{q_f}$). Figure
\ref{fig:exnolead} provides an example of this
latter Petri net.

\begin{figure}[tbp]
\begin{center}
\scalebox{1}{
\begin{tikzpicture}[node distance=1.5cm]
\tikzstyle{every state}=[inner sep=3pt,minimum size=20pt]
\node(0)[state]{$q_i$}
edge [<-,thick,loop above] node[auto] {$?a$}(0);
\node(1)[finalstate,right of=0,node distance=2cm]{$q_f$}
edge [<-,thick,out=-140,in=-40] node[auto] {$!a$}(0)
edge [<-,thick,out=140,in=40] node[auto,swap] {$!b$}(0)
edge [<-,thick] node[auto,swap] {$?b$}(0);

\node(t1)[petri-t,draw=black,fill=black,right of=1,xshift=0.5cm,label=180:{\small
  $a$}]{};
\node(t2)[petri-t,draw=black,fill=black,right of=t1,label=180:{\small
  $b$}]{};
\node(t3)[petri-t,draw=black,fill=black,right of=t2,label=180:{\small
  $b^R$}]{};
\node(t4)[petri-t,draw=black,fill=black,right of=t3,label=180:{\small
  $a^R$}]{};
\node(p1)[petri-p,above of=t1,xshift=0.75cm,draw=black,label=0:{\small
  $p_{q_i}$}]{}
edge[<-,thick,out=-140,in=120] (t1)
edge[->,thick]node[auto] {$2$} (t1)
edge[->,thick]node[auto] {$2$} (t2);
\node(t5)[petri-t2,draw=black,fill=black,left of=p1,label=180:{\small
  $t_i$}]{}
edge[->,thick](p1);
\node(p2)[petri-p,below of=t2,xshift=0.75cm,draw=black,label=180:{\small
  $p_{q_f}$}]{}
edge[<-,thick] (t1)
edge[<-,thick]node[auto,swap] {$2$} (t2)
edge[->,thick]node[auto] {$2$} (t3)
edge[->,thick](t4);
\node(p3)[petri-p,above of=t3,xshift=0.75cm,draw=black,label=180:{\small
  $p^R_{q_i}$}]{}
edge[<-,thick]node[auto] {$2$} (t3)
edge[<-,thick]node[auto,swap] {$2$} (t4)
edge[->,thick,out=-40,in=60]node[auto] {} (t4);

\node(t6)[petri-t2,draw=black,fill=black,right of=p3,label=0:{\small
  $t^R_i$}]{}
edge[<-,thick] (p3);
\end{tikzpicture}
}
\end{center}
\caption{A rendez-vous protocol with no leader $\Prot$ and the
  associated Petri net $\Net'_\Prot$}
\label{fig:exnolead}
\end{figure}
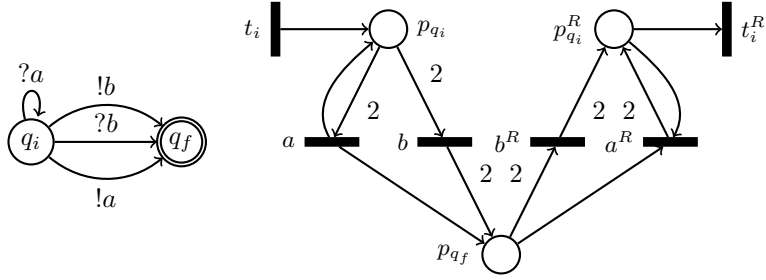

We now explain why this new net is useful to solve the  C.O.P. when
there is no leader. First remember that  thanks to Point 2. of Lemma
\ref{lem:ntonplsuone} it is enough to check whether there
  exists $N \in \nat$ such that  $C^{(N)}_i \trans^\ast C^{(N)}_f$ and
  $C^{(N+1)}_i \trans^\ast C^{(N+1)}_f$. Intuitively, in $\Net'_\Prot$
  this property will be witnessed  by the fact that we can bring $N+1$
  tokens in $p_{q_f}$ using transitions in $T$ and remove $N$ tokens
  from $p_{q_f}$ thanks to the transitions in $T^R$ letting hence one
  token in $p_{q_f}$ and similarly if there is already a token in
  $p_{q_f}$ we can bring $N$ others and remove afterwards $N+1$. As for
$\Net_\Prot$, we let $M_0$ be the marking with no token, and
$(M^{(n)})_{\set{n \in \nat}}$ be the family of markings such that
$M^{(n)}(p_{q_f})=n$ and $M^{(n)}(p)=0$ for all $p \in P' \setminus
\set{p_{q_f}}$. Note that since there is no leader, we have here $M_0=M^{(0)}$. The next lemma states the correctness of our reduction to the reversible
reachability problem.

\begin{lemma}
  \label{lem:correct-reverse}
There
  exists $N \in \nat$ such that  $C^{(N)}_i \trans^\ast C^{(N)}_f$ and
  $C^{(N+1)}_i \trans^\ast C^{(N+1)}_f$ iff $M^{(1)} \in \Reach{M_0}$
  and $M_0 \in \Reach{M^{(1)}}$ in the Petri net $\Net'_\Prot$.
\end{lemma}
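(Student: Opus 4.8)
The plan is to exploit the very specific shape of $\Net'_\Prot$: its only place shared between the forward part (transitions $T$, places $P$) and the reverse part (transitions $T^R$, places $P^R$) is $p_{q_f}$, and this coupling is one-way. Indeed, since there is no edge out of $q_f$, every forward transition $t$ has $Pre(t)(p_{q_f})=0$, and dually every reverse transition $t^R$ has $Post'(t^R)(p_{q_f})=Pre(t)(p_{q_f})=0$. Hence forward transitions only \emph{produce} tokens in $p_{q_f}$ and read only places in $P\setminus\set{p_{q_f}}$, while reverse transitions only \emph{consume} tokens in $p_{q_f}$ and otherwise touch only the private places $P^R\setminus\set{p^R_{q_f}}$. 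Throughout I use Lemma \ref{lem:reductionCOP} in its no-leader instantiation ($C^{(n)}_i \trans^\ast C^{(n)}_f$ iff $M^{(n)}_f \in \Reach{M_0}$ in $\Net_\Prot$) and the dual fact recalled above ($C^{(n)}_i \trans^\ast C^{(n)}_f$ iff $M^R_0 \in \Reach{M^{R,(n)}_f}$ in $\Net^R_\Prot$).

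For the direction $(\Rightarrow)$, assume such an $N$ exists. To witness $M^{(1)} \in \Reach{M_0}$, I first replay in $\Net'_\Prot$ the forward run of $\Net_\Prot$ given by Lemma \ref{lem:reductionCOP} for $C^{(N+1)}_i \trans^\ast C^{(N+1)}_f$; as forward transitions ignore the reverse places, this reaches the marking with $N+1$ tokens in $p_{q_f}$ and $0$ everywhere else. Then I replay the reverse run of $\Net^R_\Prot$ witnessing $M^R_0 \in \Reach{M^{R,(N)}_f}$. By monotonicity this run is enabled (we have $N+1\geq N$ tokens in $p_{q_f}$), it never produces in $p_{q_f}$, it consumes exactly $N$ tokens there, and it brings all private reverse places back to $0$; one token is thus left in $p_{q_f}$, i.e. $M^{(1)}$. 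The claim $M_0 \in \Reach{M^{(1)}}$ is symmetric: from the single token in $p_{q_f}$, the forward run for $C^{(N)}_i$ adds $N$ tokens (total $N+1$), and the reverse run for $M^R_0 \in \Reach{M^{R,(N+1)}_f}$ consumes all $N+1$, ending at the empty marking $M_0$.

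For the direction $(\Leftarrow)$, I work with a single run $\rho\colon M_0 \transP^\ast M^{(1)}$ and decompose it into its forward and its reverse transitions. The forward subsequence, extracted in order, is already a valid firing sequence of $\Net_\Prot$ from $M_0$: forward transitions are enabled using only places of $P\setminus\set{p_{q_f}}$, which reverse transitions never touch, so deleting the reverse transitions preserves enablement. Since no reverse transition produces in $p_{q_f}$, this run accumulates there exactly the total forward production $k_F$ while leaving the other forward places at their value $0$ in $M^{(1)}$; hence it reaches $M^{(k_F)}_f$, and Lemma \ref{lem:reductionCOP} gives $C^{(k_F)}_i \trans^\ast C^{(k_F)}_f$. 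Dually, letting $k_R$ be the total number of tokens consumed in $p_{q_f}$ by reverse transitions, the extracted reverse subsequence is a valid run of $\Net^R_\Prot$ from $M^{R,(k_R)}_f$ to $M^R_0$: on the private places of $P^R$ enablement is inherited from $\rho$ (forward transitions do not change them, and both start at $0$), while on $p^R_{q_f}$ the tokens available before each reverse step, namely $k_R$ minus what is already consumed, equal the consumption still to come and hence suffice. This yields $C^{(k_R)}_i \trans^\ast C^{(k_R)}_f$. Counting tokens in $p_{q_f}$ along $\rho$ (value $0$ initially, $1$ finally, changed only by forward production and reverse consumption) gives $k_F-k_R=1$, so $N:=k_R$ satisfies both required reachabilities.

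The main obstacle is precisely this decomposition in $(\Leftarrow)$: one must untangle an arbitrary interleaving of forward and reverse transitions into two independent runs. This succeeds only because the coupling between the two parts is through the single place $p_{q_f}$ and is one-way—forward transitions never read $p_{q_f}$ and reverse transitions never write it—so the sole quantity to be reconciled across the decomposition is the net token flow through $p_{q_f}$, settled by the balance $k_F = k_R+1$.
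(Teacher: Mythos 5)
Your proof is correct and takes essentially the same route as the paper's: the forward direction concatenates the forward run for $N+1$ tokens with the reversed run for $N$ (and symmetrically for $M_0 \in \Reach{M^{(1)}}$), and the converse untangles a single run $M_0 \transP^\ast M^{(1)}$ using the one-way coupling through the unique shared place $p_{q_f}$, needing only the first of the two reachability hypotheses. The sole presentational difference is that the paper commutes the run into a forward phase followed by a reverse phase and identifies the switch-point marking as $M^{(N+1)}$, whereas you extract the two subsequences in place and settle the count by the flow balance $k_F = k_R + 1$; both rest on the identical structural observation.
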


\begin{proof}
Assume that
there exists $N \in \nat$ such that  $C^{(N)}_i \trans^\ast C^{(N)}_f$ and
  $C^{(N+1)}_i \trans^\ast C^{(N+1)}_f$ then in $\Net'_\Prot$ from
  $M^{(0)}$ we can reach $M^{(N+1)}$ taking only transitions in $T$ (thanks
  to Lemma \ref{lem:reductionCOP}) and from $M^{(N+1)}$ we can  reach
  $M^{(1)}$ letting one token in $p_{q_f}$ and removing all the other
  tokens using only transitions in $T^R$ (and again using Lemma
  \ref{lem:reductionCOP} and the fact that $C^{(N)}_i \trans^\ast
  C^{(N)}_f$)). Hence $M^{(1)} \in \Reach{M_0}$. Similarly we can show
  that $M_0 \in \Reach{M^{(1)}}$ by from $M^{(1)}$ reaching
  $M^{(N+1)}$ using transitions in $T$ and the fact that $C^{(N)}_i
  \trans^\ast C^{(N)}_f$ . And  then from $M^{(N+1)}$ we can reach
  $M_0$  using transitions in $T^R$ and the fact that $C^{(N+1)}_i
  \trans^\ast C^{(N+1)}_f$.

Assume now that $M^{(1)} \in \Reach{M_0}$. Note that in the execution
from $M_0$ to $M^{(1)}$, we can assume that first the only transitions
that occur are in $T$ and then the only used transitions belong to
$T^R$, because the only common place between these two sets of
transitions is $p_{q_f}$ and transitions from $T$ only produce tokens
in this place whereas transitions in $p_{q_f}$ only consume them
(remember we assume that in  $\Prot$ no transition goes out of
$q_f$). Hence in $\Net'$ we have an execution of  the form $M_0
\ltransP{t_0} \ldots  \ltransP{t_k} M \ltransP{t'_0}  \cdots
\ltransP{t'_\ell} M^{(1)}$ where $\set{t_0,\ldots,t_k} \subseteq T$ and
$\set{t'_0,\ldots,t'_\ell} \subseteq T^R$. Since the transitions in
$T$ only consume and produce tokens in $P$ and the one in $T^R$ only
consume tokens in $P^R  \cup \set{p_{q_f}}$ and produces tokens in
  $P^R$, we deduce that there exists some $N$ such that
  $M=M^{(N+1)}$. Using Lemma \ref{lem:reductionCOP}, we deduce from
  $M^{(N+1)}\in \Reach{M_0}$ that $C^{(N+1)}_i \trans^\ast
C^{(N+1)}_f$ and from the fact that $M^{(1)} \in \Reach{M^{(N+1)}}$ in
  $\Net'_\Prot$ using only transitions in $T^R$ that we
  have as well $M_0 \in  \Reach{M^{(N)}}$ in $\Net^R$ and consequently $C^{(N)}_i \trans^\ast
C^{(N)}_f$.
  \end{proof}

Since we know that the reversible reachability problem for Petri net
is \textsc{EXPspace}-complete \cite{leroux-vector-lmcs-13}, we obtain the following complexity
result.

\begin{theorem}
C.O.P. restricted to rendez-vous protocols with no leader is in \textsc{EXPSpace}.
\end{theorem}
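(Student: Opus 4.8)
The plan is to assemble the lemmas already proved into a polynomial reduction to the reversible reachability problem. First I would apply Point~2 of Lemma~\ref{lem:ntonplsuone} to eliminate the universal quantification over large $n$: a cut-off exists if and only if there is a single $N \in \nat$ with $C^{(N)}_i \trans^\ast C^{(N)}_f$ and $C^{(N+1)}_i \trans^\ast C^{(N+1)}_f$. This collapses the C.O.P. for leaderless protocols into a purely existential statement about two consecutive network sizes.

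Next I would invoke Lemma~\ref{lem:correct-reverse}, which states exactly that this existential property holds if and only if $M^{(1)} \in \Reach{M_0}$ and $M_0 \in \Reach{M^{(1)}}$ in the merged Petri net $\Net'_\Prot$. Since these two conditions together assert that $M_0$ and $M^{(1)}$ are mutually reachable, the whole question is an instance of the reversible reachability problem on $\Net'_\Prot$ with the distinguished markings $M_0$ and $M^{(1)}$. The heart of the argument is therefore simply recognizing this instance. It then remains to verify that the reduction is efficient: the net $\Net_\Prot$ of Section~\ref{sec:topetri} has one place per state and one transition per compatible pair of edges, hence is of polynomial size in $|\Prot|$; its reverse $\Net^R_\Prot$ and the merged net $\Net'_\Prot$ are obtained by purely syntactic manipulations preserving polynomiality; and each of $M_0$ and $M^{(1)}$ carries at most one token. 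Thus $\Net'_\Prot, M_0, M^{(1)}$ can be built in polynomial time from $\Prot$. Combining this with the \textsc{EXPSpace}-completeness of reversible reachability~\cite{leroux-vector-lmcs-13} and the closure of \textsc{EXPSpace} under polynomial-time reductions yields the bound.

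The only genuinely delicate point — which is, conveniently, already discharged by Lemma~\ref{lem:correct-reverse} and so does not reappear here — is the conceptual one of why \emph{mutual} reachability, rather than a single directed reachability query, is the correct target. The forward direction $M^{(1)} \in \Reach{M_0}$ encodes the run on $N+1$ processes (tokens produced via $T$, then $N$ of them consumed via $T^R$), while the backward direction $M_0 \in \Reach{M^{(1)}}$ encodes the run on $N$ processes, and it is the shared place $p_{q_f}$ that forces both runs to use a \emph{common} $N$. Once this correspondence is granted by the earlier lemma, the complexity statement is immediate, so I expect no new obstacle in the proof itself beyond carefully confirming the polynomial size of the reduction so that the upper bound transfers intact.
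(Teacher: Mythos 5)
Your proposal is correct and follows exactly the paper's own route: reduce via Point~2 of Lemma~\ref{lem:ntonplsuone} to the existence of two consecutive good sizes, invoke Lemma~\ref{lem:correct-reverse} to recast this as mutual reachability of $M_0$ and $M^{(1)}$ in the merged net $\Net'_\Prot$, and conclude by the \textsc{EXPSpace}-completeness of reversible reachability from \cite{leroux-vector-lmcs-13}. Your explicit check that the construction of $\Net'_\Prot$ and the two markings is polynomial in $|\Prot|$ is a detail the paper leaves implicit, but it is the same argument.
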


\begin{figure}[hbtp]
\begin{center}
\scalebox{0.9}{
\begin{tikzpicture}[node distance=1.8cm]
\tikzstyle{every state}=[inner sep=3pt,minimum size=20pt]
\node(0)[state]{$q_i$};

\node(1)[state,right of=0]{$q_1$}
edge [<-,thick,out=-160,in=-20] node[sloped,above] {$?1$}(0)
edge [<-,thick,out=160,in=20] node[sloped,above] {$!1$}(0);

\node(2)[state,right of=1]{$q_2$}
edge [<-,thick] node[auto] {$!2$}(1)
edge [<-,thick,bend left=40] node[auto] {$?2$}(0);

\node(3)[state,right of=2]{$q_3$}
edge [<-,thick] node[auto] {$!3$}(2)
edge [<-,thick,bend left=50] node[auto] {$?3$}(0);

\node(4)[right of=3,xshift=-1cm]{$\cdots$};

\node(n)[state,right of=4]{$q_n$}
edge [<-,thick] node[auto] {$!n$}(4)
edge [<-,thick,bend left=60] node[auto] {$?n$}(0)
edge [<-,thick,loop above] node[auto] {$!a$}(n);

\node(f)[finalstate,right of=n]{$q_f$}
edge [<-,thick] node[auto] {$!a$}(n)
edge [<-,thick,bend left=-40] node[auto,swap] {$?a$}(0);
\end{tikzpicture}
}
\end{center}
\caption{A rendez-vous protocol with no leader and an exponential cut-off}
\label{fig:expcut}
\end{figure}
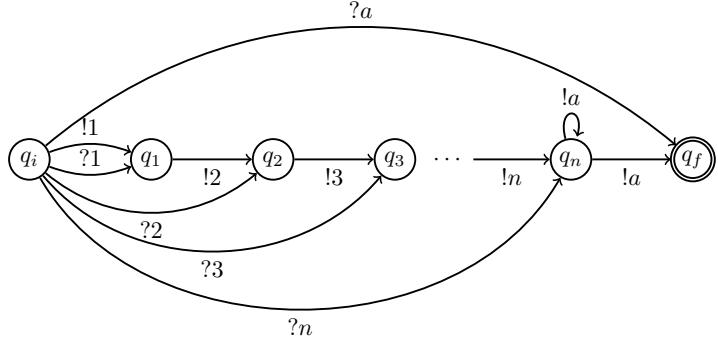

We were not able to propose a lower bound for the C.O.P. apart for the
general case, but when there is no leader, we know that there is a
protocol which admits a cut-off whose value is exponential in the size
of a protocol. This protocol is shown on Figure \ref{fig:expcut}. To
bring a process in $q_1$, we need in fact two processes, to bring a
process in $q_2$ and empty $q_1$, we need four processes and so
on. The letter $a$ is then used to ensure that as soon as we have
processes only in $q_n$ and in $q_i$ (and at least one of them in each
of these states), there is a way to bring all of
them in $q_f$.

\section{Conclusion}

We have shown here that the C.O.P. is decidable for rendez-vous networks. Furthermore we have provided complexity upper bounds when considering restrictions on the networks such as symmetric rendez-vous or absence of leader. Unfortunately, we did not succeed in finding matching lower bounds. Reducing other problems to the C.O.P. is in fact tedious without leader or when allowing only symmetric rendez-vous, because it is then quite hard to enforce that a specific number of processes are in some states which is a property that is in general needed to design reductions.  However we have some hope to either improve our upper bounds or find matching lower bounds. We wish as well to understand in which matters the techniques we used could be adapted to other parameterized systems and more specifically to population protocols. Finally, one of the justification to consider the cutoff problem is that in some distributed systems it could be the case that a correctness property does not hold for any number of processes, but that a minimal number of participants is needed to reach a goal. It could be interesting to study a variant of our cutoff problem where we do not require all the processes to reach a final state but we want to know given a number of processes how many among them can be brought  in such a state. An interesting property could be to check whether there exists a bound $b$ such that for any number of processes, the minimal number that can not be brought to a final state by any execution is always lower than $b$. In such networks, it would mean that at most $b$ entities have to be sacrificed to let the others reach the final state.

\bibliographystyle{plainurl}

\bibliography{cutoff.bib}


\end{document}